\titlespacing{\paragraph}{0pt}{2.25ex plus 1ex minus .2ex}{0.4em}
\theoremstyle{definition} 
\newtheorem{definition}{Definition}
\newtheorem{specification}{Specification}
\newtheorem{interaction}{Interaction}
\newtheorem{remark}{Remark}
\declaretheorem[style=definition]{example}
\renewcommand\thmcontinues[1]{Continued}
\renewcommand\thmcontinues[1]{Continued}
\theoremstyle{plain} 
\newtheorem{assumption}{Assumption}
\newtheorem{lemma}{Lemma}
\newtheorem{proposition}{Proposition}
\renewcommand{\P}{\mathbb{P}} 
\newcommand{\R}{\mathbb{R}} 
\newcommand{\E}{\mathbb{E}} 
\newcommand{\1}{\mathds{1}} 
\newcommand{\LP}{\mathbb{L}} 
\newcommand{\var}{\textnormal{var}} 
\newcommand{\ind}{\perp\!\!\!\!\!\!\perp} 
\DeclareMathOperator*{\argmax}{\textnormal{argmax}} 
\renewcommand{\text}[1]{\textnormal{#1}} 
\newcommand{\B}{\mathcal{B}}
\newcommand{\G}{\mathcal{G}}
\renewcommand{\S}{\mathcal{S}}
\newcommand{\V}{\mathcal{V}}
\newcommand{\Y}{\mathcal{Y}}
\begin{document}

\title{Interpreting TSLS Estimators in Information Provision Experiments\thanks{We are grateful to Isaiah Andrews, Anna Mikusheva, and Frank Schilbach for their guidance and support; to Josh Angrist, David Autor, Dylan Balla-Elliott, Simon Jäger, Haruki Kono, and participants at the MIT Behavioral, Econometrics, and Third-Year Lunches for helpful comments and conversations; to Simon Jäger, Chris Roth, Nina Roussille, and Benjamin Schoefer for providing us with an excellent replication package; to Olivier Coibion, Yuriy Gorodnichenko, and Saten Kumar for making public an excellent replication package; to Chantal Pezold for providing us with preliminary experimental data. We acknowledge financial support from the George and Obie Shultz Fund, and the National Science Foundation Graduate Research Fellowship under Grant No. 1745302.}}
\author{Vod Vilfort\thanks{%
Department of Economics, MIT, Cambridge, MA 02142, vod@mit.edu} \and Whitney Zhang\thanks{%
Department of Economics, MIT, Cambridge, MA 02142, zhangww@mit.edu}}
\date{June 21, 2024}
\maketitle

\begin{abstract}
To estimate the causal effects of beliefs on actions, researchers often run information provision experiments. We consider the causal interpretation of two-stage least squares (TSLS) estimators in these experiments. We characterize common TSLS estimators as weighted averages of causal effects, and interpret these weights under general belief updating conditions that nest parametric models from the literature. Our framework accommodates TSLS estimators for both passive and active control designs. Notably, we find that some passive control estimators allow for negative weights, which compromises their causal interpretation. We give practical guidance on such issues, and illustrate our results in two empirical applications. 
\end{abstract}

\pagebreak

\section{Introduction}
There has been a surge in research that uses the random provision of information to estimate the causal effects of beliefs on actions.\footnote{In macroeconomics, researchers have studied the effects of beliefs about inflation, GDP growth, and other macroeconomic indicators on firm and household decision making \citep{coibion2020inflation, coibion2021effect, coibion2022monetary, coibion2023forward, kumar2023effect}; in labor economics, researchers have studied the effects of beliefs about others' wages on one's own efforts and job search decisions \citep{cullen2022much, jager2022worker}, and how beliefs about labor market tightness affect support for unions \citep{pezold_labor_2023}; at the intersection of labor and public finance, researchers have studied the effect of beliefs about future government benefits on human capital investments \citep{deshpande2023lack}; yet others have studied the effects of beliefs about discrimination on policy preferences \citep{haaland2023racial, settele_how_2022}.} These information provision experiments provide a basis for testing the assumptions of economic models, differentiating across theoretical mechanisms, and informing economic policy \citep{haaland2023designing}. Given the importance of these goals, it is essential that the causal parameters of interest be carefully defined and accurately estimated. 

However, in practice, the causal parameters of interest are often informally defined, or are developed in stylized models that impose strong conditions on agents' beliefs, learning, and actions. In such models, it is unclear which of the implied restrictions drive the conclusions of an information provision experiment. Relatedly, it is ambiguous which of the many estimation strategies from the literature recover interpretable causal parameters. This paper addresses both of these concerns. 

An information provision experiment generally proceeds as follows. First, the experiment elicits \textit{features} (e.g., expectations) of agents' prior beliefs over a set of action-relevant states. Next, the experiment randomly assigns agents to different groups---either to a control group receiving no information, or to one of potentially multiple treatment groups receiving \textit{signals}, where a signal is a piece of relevant information. Then, the experiment elicits features of agents' posterior beliefs, and records the actions taken under those posterior beliefs. Finally, the experiment uses group assignment to instrument for beliefs in two-stage least squares (TSLS) regressions of actions on posterior features: TSLS specifications in \textit{passive control} experiments compare the control group to the treatment group(s), whereas specifications in \textit{active control} experiments compare across treatment groups.

In Section \ref{main:sec:setup}, we introduce an instrumental variables (IV) framework in which beliefs affect actions through features. We consider the \textit{partial effects} of features on actions, which we allow to vary across agents and feature values. In Section \ref{main:sec:TSLS}, we characterize TSLS estimators from the literature as weighted average partial effects (APEs) across these two margins. Notably, without further structure on belief updating to ensure IV monotonicity, the weights on agents' partial effects can be \textit{negative}, which compromises the causal interpretation of TSLS estimators \citep{imbens1994identification, blandhol2022tsls}. 

In Section \ref{main:sec:interpretation}, we propose conditions that ensure IV monotonicity. First, we propose \textit{signal monotonicity}, which formalizes the idea that agents should update their beliefs ``towards'' the signals. We motive signal monotonicity in a model of posterior formation where agents perceive the signals as realizations from probability distributions that satisfy the monotone likelihood ratio (MLR) property. This allows for a broad class of signal distributions and belief updating rules. In particular, signal monotonicity nests existing models that assume Gaussian distributions and Bayesian updating \citep{armantier2016price, cavallo2017inflation, armona2019home, cullen2022much, fuster2022expectations, balla2022determinants}. We additionally propose \textit{control-group stability} for passive control comparisons, and \textit{treatment-group neutrality} for active control comparisons; both formalize common experimental design considerations.

Under these conditions, TSLS estimators recover APEs that weight agents according to how much their beliefs respond to information provision. In particular, the weights for active control estimators emphasize agents with larger differences in their counterfactual posterior features across treatment groups (i.e., low versus high signal). Analogously, the weights for passive control estimators emphasize agents with larger differences in their counterfactual posterior features across treatment and control (i.e., signal versus no signal). However, some passive control estimators (i) up-weight agents with larger \textit{perception gaps} (i.e., the difference between prior features and signals); or (ii) are contaminated by linear combinations of the TSLS first-stage coefficients. The latter allows for negative weights, which we advise against. The former ensures non-negative weights, but leads to a different APE than the baseline case above. We interpret these differences and give practical recommendations in Section \ref{main:sec:recs}.

In Section \ref{main:sec:applications}, we take our results on passive control estimation to \cite{kumar2023effect} and \cite{jager2022worker}. In the \cite{kumar2023effect} application, we find that for three outcomes, the uncontaminated estimators produce coefficients that are one-third to two-thirds of the magnitude of those produced by the contaminated estimators; for one outcome, while the contaminated coefficients are statistically significant from the null, the uncontaminated coefficients are not. Together, we take this as evidence that negative weights meaningfully impact the estimates from the contaminated specifications. In the \cite{jager2022worker} application, for one outcome, the non-up-weighted estimator produces a coefficient that is 30-40\% larger in magnitude than those of the other specifications. Altogether, these empirical results highlight that the choice of weights can substantially impact the magnitude and significance of one's estimates.
 
\paragraph{Related Literature.}
Our formulation of the causal effects of beliefs as the partial effects of features aligns with existing empirical practice, and nests models considered in the literature where agents optimize their actions as a function of their beliefs \citep{cullen2022much, balla2023identifying, jager2022worker}. In allowing for heterogeneous partial effects, we are consistent with characterizations of IV estimators from other settings with continuous endogenous variables \citep{angrist2000interpretation, rambachan2021common, andrews2023causal}. To our knowledge, heterogeneity in the feature value margin has not received explicit attention in the information provision literature---we show in Section \ref{main:sec:recs} that accounting for this margin affects the interpretation of the policy counterfactuals captured by passive and active control experiments.

Our proposed conditions correspond to notions from the IV literature. For active control estimators, signal monotonicity and treatment-group neutrality correspond to the canonical IV monotonicity condition \citep{imbens1994identification, angrist2000interpretation}. In particular, agents' posterior features are larger when provided a high signal as opposed to a low signal. For passive control estimators, signal monotonicity and control-group stability correspond to ``weak'' IV monotonicity, which allows the direction of IV monotonicity to vary by covariates \citep{sloczynski2020should, blandhol2022tsls}. In particular, agents' posterior features are larger when provided a signal above their prior feature, and smaller when provided a signal below their prior feature. Finally, the contamination that we find in some passive control estimators is an instance of TSLS specifications failing to be ``monotonicity correct'' in their first-stage \citep{blandhol2022tsls}. 

\cite{haaland2023designing} survey applications of information provision experiments, and give guidance on experimental design, belief elicitation techniques, and other technical challenges. In contrast, we develop theory for the identification and estimation of causal effects. The closest to our paper in this regard is the independent and concurrent work of \cite{balla2023identifying}, who likewise studies the interpretation of TSLS estimators in information provision experiments. \cite{balla2023identifying} considers the partial effects of expectations, and targets an APE that places equal weights across agents. To identify this APE, \cite{balla2023identifying} restricts heterogeneity in the feature value margin, and appeals to the structure of (i) active control comparisons; and (ii) linear updating of expectations. In contrast, since we primarily seek to characterize existing specifications, we achieve identification under weaker conditions on agents' actions, in both passive and active control experiments, and for more general learning environments. Finally, while \cite{balla2023identifying} also interprets TSLS specifications from the literature, we discuss and characterize a more comprehensive set of specifications in a more general framework.

\section{Setup}\label{main:sec:setup}
Let $\Delta(\Omega)$ denote the set of probability distributions over states $\omega \in \Omega$. Agents' beliefs $B$ are contained in a subset of distributions $\B \subseteq \Delta(\Omega)$. Examples include (i) sets of distributions for which relevant moments exist; and (ii) families of parametric distributions considered in the literature. We use $\phi: \B \to \R$ to index features of interest. For instance, when $\Omega \subseteq \R$, examples include the mean $\mu(B) := \int \omega dB(\omega)$ and the variance $ \sigma^{2}(B) := \int (\omega - \mu(B))^{2} dB(\omega)$.

\subsection{Experiment}
An information provision experiment generally proceeds as follows. First, the experiment elicits features $\phi(B_{i0})$ of agents' prior beliefs $B_{i0}$. Next, the experiment randomly assigns agents to groups $g \in \G$. In passive control experiments, the groups are $\G = \{C, T\}$: Agents assigned to treatment receive signals $S_{i}^{T}$, whereas agents assigned to control receive no information $S_{i}^{C} := \varnothing$. In active control experiments, the groups are $\G = \{L, H\}$: Agents assigned to high treatment receive signals $S_{i}^{H}$, whereas agents assigned to low treatment receive signals $S_{i}^{L} < S_{i}^{H}$. 

Given group assignment $G_{i} \in \G$, agents form posterior beliefs $B_{i1} \equiv B_{i1}^{G_{i}}$. These posterior beliefs influence agents' actions/outcomes $Y_{i} \in \R$, which the experiment records; we assume there exist functions $Y_{i}^{g}(B)$ such that $Y_{i} \equiv Y_{i}^{G_{i}}(B_{i1})$. Finally, the experiment elicits posterior features $\phi(B_{i1})$. The goal is to estimate the causal effects of beliefs on actions. 

\begin{example}[name=Passive Control, label=main:lab:example:jager] \cite{jager2022worker} run a passive control experiment on a sample of workers from the German Socio-Economic Panel to investigate if workers have accurate beliefs about the wage distribution. 

In our notation, workers $i$ have beliefs $B$ about the wages $\omega$ at the best job they would find (i.e., their outside option) if forced to leave their current job. These beliefs influence workers' labor market behavior $Y_{i}$, such as their intended probability of looking for a new job or asking for a raise. The treatment group receives the average wage $S_{i}^{T}$ of workers with similar characteristics. \cite{jager2022worker} elicit workers' prior and posterior expectations $\mu(B_{i0}), \mu(B_{i1})$ of the wages at their outside options.
\end{example}

\begin{example}[name=Active Control, label=main:lab:example:roth]
Households' personal economic outlook and economic decisions depend on their macroeconomic expectations. To study these relationships, \cite{roth2020expectations} run an active control experiment on a sample of U.S. households. 

In our notation, households $i$ have beliefs $B$ about the probability $\omega$ of a recession. These beliefs influence households' actions/outcomes $Y_{i}$, such as their anticipated earnings growth and net stock purchases. For $g = L$, the experiment provides the predicted probability of a recession from a pessimistic forecaster. For $g = H$, the experiment provides the analogous information from an optimistic forecaster. Thus, $S_{i}^{L} < S_{i}^{H}$, with $S_{i}^{g}$ constant across $i$. \cite{roth2020expectations} elicit households' prior and posterior expectations $\mu(B_{i0}), \mu(B_{i1})$ of the probability of a recession.
\end{example}

\begin{remark}
We focus on experiments that provide quantitative information $S_{i}^{g} \in \R \cup \varnothing$, since that is the leading case in the literature. That said, there are settings where the information has qualitative components---Appendix \ref{main:sec:general.identification} covers these settings. 
\end{remark}

\begin{remark}
The results that follow apply to \textit{pairs} of comparison groups $\{g, \Tilde{g}\}$. Therefore, in experiments with more than two treatment groups, one can condition on pairs of comparison groups, apply our results, and then aggregate as necessary---see Appendix \ref{main:sec:general.estimation} for details. In considering pairs, we also avoid complications that arise with the causal interpretation of TSLS with multiple instruments \citep{mogstad2021causal}.
\end{remark}

\subsection{Instrumental Variables}\label{main:sec:IV}
The random assignment of agents to groups provides a basis for estimating causal effects. Formally, we assume that group assignment $G_{i}$ is a valid instrument. In what follows, $X_{i}$ is a vector of agent characteristics. 
\begin{assumption}[Valid Instrument]\label{main:ass:IV}
$G_{i}$ satisfies 
\begin{enumerate}[label=(\roman*)]
    \item independence: $G_{i} \ind (X_{i}, B_{i0}, \{S_{i}^{g}, B_{i1}^{g}, Y_{i}^{g}(B)\}_{g \in \G, B \in \B})$;
    \item exclusion: $Y_{i}^{g}(B) = Y_{i}(B)$, $\forall g, B$.
\end{enumerate}
\end{assumption}
Independence means that the experiment cannot condition group assignment $G_{i}$ on agent characteristics $X_{i}$, prior features $\phi(B_{i0})$, anticipated actions, and so on. However, it allows the content of the signals $S_{i}^{g}$ to depend on such variables \citep{jager2022worker, deshpande2023lack}. In any case, random assignment is sufficient for Assumption \ref{main:ass:IV}(i).

Exclusion means that group assignment only impacts actions through posterior beliefs. Therefore, one practical concern is that information provision may also affect actions through emotional responses \citep{haaland2023designing}. However, Assumption \ref{main:ass:IV}(ii) does accommodate emotional responses that solely affect posterior \textit{formation}---examples include belief-based utility or motivated reasoning \citep{brunnermeier2005optimal, epley2016mechanics}. 

\subsection{Feature Action Functions}
To formalize a tractable notion for the causal effects of beliefs  on actions, we assume that actions depend on beliefs through a single feature of interest $\phi$. In what follows, we suppose that the set of possible feature values $\phi(\B) := \{\phi(B): B \in \B\}$ is convex: For any two values in $\phi(\B)$, any third value between them can be rationalized by some $B \in \B$. This ensures that the causal effects defined in Section \ref{main:sec:TSLS} correspond to beliefs that the agents could actually hold.\footnote{Convexity holds whenever the set of possible beliefs $\B$ is sufficiently rich---see Appendix \ref{main:sec:primitive.conditions}.}

\begin{assumption}[Feature Action Functions]\label{main:ass:parametric}
$\phi(\B)$ is convex, and there exists a continuously differentiable function $Y_{i}^{\phi}(v)$ defined over it such that $Y_{i}(B) = Y_{i}^{\phi}(\phi(B))$.
\end{assumption}

Assumption \ref{main:ass:parametric} aligns with existing practice from the literature, which often frames the causal effects of beliefs on actions in terms of features. If $\B$ is parametrized by $\phi$, then actions depend on beliefs through $\phi$; examples include one-parameter exponential families \citep{lehmann2006theory}. If we broaden $\B$ to be the set of distributions with finite second moments, then another approach is to restrict preferences. For example, if we assume that agents are risk neutral in the sense that their optimal actions only depend on beliefs via first moments, then $Y_{i}(B) = Y_{i}^{\mu}(\mu(B))$. We formalize these arguments in Appendix \ref{main:sec:primitive.conditions}.

In some cases, Assumption \ref{main:ass:parametric} allows beliefs to affect actions through additional features. In particular, if $Y_{i}(B) = Y_{i}^{\phi, \eta}(\phi(B), \eta(B))$ and $\eta(B_{i1}^{g}) = \eta(B_{i1}^{\Tilde{g}})$ for $\G = \{g,\Tilde{g}\}$, then we can take $Y_{i}^{\phi}(\phi(B)) := Y_{i}^{\phi, \eta}(\phi(B), \eta(B_{i1}^{g}))$. This accommodates  models from the literature that predict $\sigma^{2}(B_{i1}^{H}) = \sigma^{2}(B_{i1}^{L})$ in active control experiments with $\phi = \mu$---see Appendix \ref{main:sec:gaussian.model}. An analogous argument applies for \textit{cross-learning}, which is when information about one state affects beliefs about another state \citep{haaland2023designing}. For example, information about inflation may influence expectations of both inflation \textit{and} economic growth \citep{coibion2023forward}. In such cases, it suffices to rule out \textit{differences} in cross-learning across $\{g,\Tilde{g}\}$.

One caveat is that, given a pair of comparison groups, we cannot allow multiple features to change at the same time. For example, we rule out comparisons where agents in one of the groups receive multiple signals with the intention of shifting multiple features \citep{cullen2022much, kumar2023effect, coibion2021effect}. Otherwise, there would be multiple endogenous variables; in such cases, TSLS estimators generally do not recover interpretable causal parameters \citep{bhuller20222sls}. 

\section{TSLS}\label{main:sec:TSLS}
We formulate the causal effects of beliefs on actions as the partial effects $\partial_{v}Y_{i}^{\phi}(v)$ of feature $\phi$.\footnote{In Appendix \ref{main:sec:alt.formulations}, we discuss alternative formulations, including one that accomodates discrete actions.} TSLS recovers APEs across feature values $v$ and agents $i$. In Sections \ref{main:sec:passive.specifications} and \ref{main:sec:active.specifications}, we derive these APEs for passive and active control specifications from the literature. In Section \ref{main:sec:takeaways.specifications}, we summarize key takeaways and motivate the need for assumptions on belief updating behavior. 

In what follows, $W_{i}$ is a vector that includes $1$ and potentially other variables that are independent of $G_{i}$. Let $\text{sign}(v) := \1\{v \geq 0\} - \1\{v \leq 0\}$ and $\psi_{i}^{g\tilde{g}} := \text{sign}(\phi(B_{i1}^{\tilde{g}}) - \phi(B_{i1}^{g}))$. Let $\lambda_{i}^{g\tilde{g}}(v)$ denote the density function of the uniform distribution on the values $v$ between $\phi(B_{i1}^{g})$ and $\phi(B_{i1}^{\tilde{g}})$. Propositions \ref{main:prop:passive.baseline} and \ref{main:prop:active.baseline} follow from TSLS algebra and the fundamental theorem of calculus, in the spirit of \citet[Theorem 1]{angrist2000interpretation}. See Appendix \ref{main:sec:proofs} for details.

\subsection{Passive Control TSLS}\label{main:sec:passive.specifications}

\begin{specification}\label{main:spec:passive}
Consider $I_{i}$, a sub-vector of $W_{i}$. Let $F_{i} = W_{i}'\pi_{0} + \1\{G_{i} = T\}I_{i}'\pi$ be the first-stage population linear regression of $\phi(B_{i1})$ on $W_{i}$ and $I_{i}\1\{G_{i} = T\}$. The passive control specification is  
\begin{align*}
\begin{split}
    \phi(B_{i1}) &= W_{i}'\pi_{0} + \1\{G_{i} = T\}I_{i}'\pi + \zeta_{i}, \\
    Y_{i} &= W_{i}'\gamma_{0} + \gamma^{CT} F_{i} + \upsilon_{i},
\end{split}
\end{align*}
where $\zeta_{i}$ and $\upsilon_{i}$ are population residuals. The coefficient of interest is $\gamma^{CT}$.
\end{specification}

\begin{proposition}\label{main:prop:passive.baseline}
Let $\G = \{C, T\}$ and suppose Assumptions \ref{main:ass:IV} and \ref{main:ass:parametric} are satisfied. If (i) $\E[W_{i}W_{i}']$ is full-rank and $\P(G_{i} = T) \in (0,1)$; and (ii) $\E[I_{i}\phi(B_{i1})|G_{i}=T]- \E[I_{i}\phi(B_{i1})|G_{i}=C] \neq 0$, then $\gamma^{CT}$ identifies
\begin{align}\label{main:eq:passive.baseline}
\begin{split}
    \beta^{CT} := \E[w_{i}^{CT}\Bar{\beta}_{i}^{CT}], \quad \quad w_{i}^{CT} &= \frac{|\phi(B_{i1}^{T}) - \phi(B_{i1}^{C})|\psi_{i}^{CT}I_{i}'\pi}{\E[|\phi(B_{i1}^{T}) - \phi(B_{i1}^{C})|\psi_{i}^{CT}I_{i}'\pi]}, \\[10pt]
    \Bar{\beta}_{i}^{CT} &:= \int \lambda_{i}^{CT}(v)\partial_{v}Y_{i}^{\phi}(v)dv.
\end{split}
\end{align}
\end{proposition}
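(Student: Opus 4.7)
The plan is to apply Frisch-Waugh-Lovell to reduce $\gamma^{CT}$ to a ratio of simple expectations, expand the numerator via the fundamental theorem of calculus (FTC) to bring in $\bar{\beta}_{i}^{CT}$, and then show that the resulting denominator coincides with the normalizing constant in the definition of $w_{i}^{CT}$.

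First, by Frisch-Waugh, $\gamma^{CT} = \text{Cov}(Y_{i}, \tilde{F}_{i})/\text{Var}(\tilde{F}_{i})$, where $\tilde{F}_{i}$ denotes $F_{i}$ residualized against $W_{i}$. Since $I_{i}$ is a sub-vector of $W_{i}$ and $G_{i} \ind W_{i}$ by Assumption~\ref{main:ass:IV}(i), the linear projection of $\1\{G_{i}=T\}I_{i}$ onto $W_{i}$ equals $\P(G_{i}=T)\,I_{i}$, so $\tilde{F}_{i} = (\1\{G_{i}=T\} - \P(G_{i}=T))\,I_{i}'\pi$. For the numerator, independence plus the exclusion restriction in Assumption~\ref{main:ass:IV}(ii) give
\[\E[Y_{i}(\1\{G_{i}=T\}-\P(G_{i}=T))I_{i}'\pi] = \P(G_{i}=T)(1-\P(G_{i}=T))\,\E[(Y_{i}^{T}-Y_{i}^{C})\,I_{i}'\pi].\]
Applying Assumption~\ref{main:ass:parametric} and FTC,
\[Y_{i}^{T}-Y_{i}^{C} = \int_{\phi(B_{i1}^{C})}^{\phi(B_{i1}^{T})}\partial_{v}Y_{i}^{\phi}(v)\,dv = |\phi(B_{i1}^{T})-\phi(B_{i1}^{C})|\,\psi_{i}^{CT}\,\bar{\beta}_{i}^{CT},\]
after re-parametrizing the integral by the uniform density $\lambda_{i}^{CT}$. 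The denominator is $\text{Var}(\tilde{F}_{i}) = \P(G_{i}=T)(1-\P(G_{i}=T))\,\E[(I_{i}'\pi)^{2}]$, so the $\P(G_{i}=T)(1-\P(G_{i}=T))$ factors cancel.

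The key remaining step is to re-express $\E[(I_{i}'\pi)^{2}]$ in the normalizing form used by $w_{i}^{CT}$. Applying Frisch-Waugh to the first stage under independence delivers $\pi = (\E[I_{i}I_{i}'])^{-1}\,\E[I_{i}(\phi(B_{i1}^{T})-\phi(B_{i1}^{C}))]$, so
\[\E[(I_{i}'\pi)^{2}] = \pi'\E[I_{i}I_{i}']\pi = \E[I_{i}'\pi\,(\phi(B_{i1}^{T})-\phi(B_{i1}^{C}))] = \E\!\left[I_{i}'\pi\,|\phi(B_{i1}^{T})-\phi(B_{i1}^{C})|\,\psi_{i}^{CT}\right],\]
where the last equality uses the definition of $\psi_{i}^{CT}$. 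Dividing numerator by denominator yields $\gamma^{CT} = \E[w_{i}^{CT}\bar{\beta}_{i}^{CT}]$; condition~(ii) in the proposition, rewritten via independence as $\E[I_{i}(\phi(B_{i1}^{T})-\phi(B_{i1}^{C}))]\neq 0$ (in an appropriate linear-combination sense), is precisely what ensures this denominator is nonzero, while condition~(i) guarantees that the Frisch-Waugh step is valid.

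The main obstacle is the denominator re-expression: it is not a generic projection identity but instead relies jointly on independence, which reduces the residualized instrument to the clean form $(\1\{G_{i}=T\}-\P(G_{i}=T))I_{i}$, and on the sign-collapse $\phi(B_{i1}^{T})-\phi(B_{i1}^{C}) = |\phi(B_{i1}^{T})-\phi(B_{i1}^{C})|\,\psi_{i}^{CT}$. This sign-collapse forces the weights into the form in \eqref{main:eq:passive.baseline} and already foreshadows why negative weights can arise in Section~\ref{main:sec:interpretation}: the sign of $w_{i}^{CT}$ is determined by $\psi_{i}^{CT}I_{i}'\pi$, so individual first-stage ``directions'' that disagree with the population first-stage sign flip the weight negative.
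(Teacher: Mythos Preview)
Your proof is correct and follows essentially the same route as the paper's: the paper packages the Frisch--Waugh/TSLS algebra into Lemma~\ref{main:lemma:fully.general} (yielding $\gamma = \E[I_{i}'\pi(Y_{i}(B_{i1}^{T})-Y_{i}(B_{i1}^{C}))]/\E[I_{i}'\pi(\phi(B_{i1}^{T})-\phi(B_{i1}^{C}))]$) and the FTC step into Lemma~\ref{main:lemma:FTC}, but the underlying computations---residualizing to $(\1\{G_{i}=T\}-\P(G_{i}=T))I_{i}'\pi$, using independence to pass to potential outcomes, and recognizing $\pi'\E[I_{i}I_{i}']\pi = \E[I_{i}'\pi(\phi(B_{i1}^{T})-\phi(B_{i1}^{C}))]$---are identical to what you wrote out inline. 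The only cosmetic difference is that the paper reads the denominator as $\E[\tilde F_{i}\phi(B_{i1})]$ (equal to $\E[\tilde F_{i}F_{i}]=\mathrm{Var}(\tilde F_{i})$ since $\zeta_{i}\perp \tilde F_{i}$), whereas you start from $\mathrm{Var}(\tilde F_{i})$ and then re-express it via the first-stage formula for $\pi$.
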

Passive control specifications recover an APE, constructed as follows. First, for each agent $i$, the partial effect curve $ \partial_{v}Y_{i}^{\phi}(v)$ is aggregated across the range of values $v$ between $\phi(B_{i1}^{C})$ and $\phi(B_{i1}^{T})$. The density function $\lambda_{i}^{CT}(v)$ places uniform weight across this range of feature values, which generates $\Bar{\beta}_{i}^{CT}$. Then, $\Bar{\beta}_{i}^{CT}$ is averaged across agents with weights $w_{i}^{CT}$ that are proportional to (i) the signed difference $|\phi(B_{i1}^{T}) - \phi(B_{i1}^{C})|\psi_{i}^{CT}$ in $i$'s counterfactual posterior features across treatment and control; and (ii) the agent-specific linear combination $I_{i}'\pi$ of first-stage coefficients:
\begin{align*}
    w_{i}^{CT} \propto |\phi(B_{i1}^{T}) - \phi(B_{i1}^{C})|\psi_{i}^{CT}I_{i}'\pi.
\end{align*}
The weights $\lambda_{i}^{CT}(v)$ for the feature values are non-negative and integrate to one. The weights $w_{i}^{CT}$ for the agents integrate to one, but can be negative due to $\psi_{i}^{CT}$ and $I_{i}$. The literature considers a number of ``interactions'' $I_{i}$---note that $\pi$ cancels out when $I_{i}$ is a scalar.

\begin{interaction}\label{main:int:IPIV}
If $I_{i} = I_{i}^{sign} := \text{sign}(S_{i}^{T} - \phi(B_{i0}))$, then
\begin{align*}
    w_{i}^{CT} \propto |\phi(B_{i1}^{T}) - \phi(B_{i1}^{C})|\psi_{i}^{CT}\text{sign}(S_{i}^{T} - \phi(B_{i0})).
\end{align*}
This interaction is in the spirit of \cite{cantoni2019protests}. In the literature, $S_{i}^{T} - \phi(B_{i0})$ is known as the \textit{perception gap}.
\end{interaction}

\begin{interaction}\label{main:int:cullen}
If $I_{i} = I_{i}^{gap} := S_{i}^{T} - \phi(B_{i0})$, then
\begin{align*}
    w_{i}^{CT} \propto |\phi(B_{i1}^{T}) - \phi(B_{i1}^{C})|\psi_{i}^{CT}(S_{i}^{T} - \phi(B_{i0})).
\end{align*}
This interaction is in the spirit of \cite{cullen2022much}, \cite{galashin2020macroeconomic}, and \cite{balla2022determinants}. 
\end{interaction}

\begin{interaction}\label{main:int:jager}
If $I_{i} = I_{i}^{1, gap} := (1, S_{i}^{T} - \phi(B_{i0}))'$, then
\begin{align*}
    w_{i}^{CT} \propto |\phi(B_{i1}^{T}) - \phi(B_{i1}^{C})|\psi_{i}^{CT}(\pi_{1} + \pi_{2}(S_{i}^{T} - \phi(B_{i0}))).
\end{align*}
This interaction is in the spirit of \cite{jager2022worker}.
\end{interaction}

\begin{interaction}\label{main:int:kumar}
If $I_{i} = I_{i}^{1, prior} := (1, \phi(B_{i0}))'$, then
\begin{align*}
    w_{i}^{CT} \propto |\phi(B_{i1}^{T}) - \phi(B_{i1}^{C})|\psi_{i}^{CT}(\pi_{1}+ \pi_{2}\phi(B_{i0})).
\end{align*}
This interaction is in the spirit of \cite{kumar2023effect} and \citet{coibion2021effect, coibion2022monetary, coibion2023forward}. To the best of our knowledge, this interaction is only used when the signals $S_{i}^{T}$ are common across agents, which is allowed in our framework. This interaction is also analogous to an interaction from \cite{deshpande2023lack} with $I_{i} = (1, S_{i}^{T}, \phi(B_{i0}))'$. In this case, the signals must be heterogeneous, or else the first-stage coefficients cannot be identified.
\end{interaction}

\subsection{Active Control TSLS}\label{main:sec:active.specifications}

\begin{specification}\label{main:spec:active} $F_{i} = W_{i}'\pi_{0} + \pi \1\{G_{i} = H\}$ is the first-stage population linear regression of $\phi(B_{i1})$ on $W_{i}$ and $\1\{G_{i} = H\}$. The active control specification is
\begin{align*}
\begin{split}
    \phi(B_{i1}) &= W_{i}'\pi_{0} + \pi \1\{G_{i} = H\} + \zeta_{i}, \\
    Y_{i} &= W_{i}'\gamma_{0} + \gamma^{LH} F_{i} + \upsilon_{i},
\end{split}
\end{align*}
where $\zeta_{i}$ and $\upsilon_{i}$ are population residuals. The coefficient of interest is $\gamma^{LH}$.
\end{specification}

\begin{proposition}\label{main:prop:active.baseline}
Let $\G = \{L, H\}$ and suppose Assumptions \ref{main:ass:IV} and \ref{main:ass:parametric} are satisfied. If (i) $\E[W_{i}W_{i}']$ is full-rank and $\P(G_{i} = H) \in (0,1)$; and (ii) $\E[\phi(B_{i1})|G_{i} = H] - \E[\phi(B_{i1})|G_{i} = L] \neq 0$, then $\gamma^{LH}$ identifies 
\begin{align}\label{main:eq:active.baseline}
\begin{split}
    \beta^{LH} := \E[w_{i}^{LH}\Bar{\beta}_{i}^{LH}], \quad \quad w_{i}^{LH} &:= \frac{|\phi(B_{i1}^{H}) - \phi(B_{i1}^{L})|\psi_{i}^{LH}}{\E[|\phi(B_{i1}^{H}) - \phi(B_{i1}^{L})|\psi_{i}^{LH}]}, \\[10pt]
     \Bar{\beta}_{i}^{LH} &:= \int \lambda_{i}^{LH}(v)\partial_{v}Y_{i}^{\phi}(v)dv.
\end{split}
\end{align}
\end{proposition}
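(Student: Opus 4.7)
The plan is to reduce the TSLS coefficient $\gamma^{LH}$ to a Wald-style ratio, then apply the fundamental theorem of calculus to the numerator agent-by-agent, and finally repackage the resulting path integrals as uniform-weighted averages.

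First I would invoke Frisch--Waugh--Lovell on Specification \ref{main:spec:active}. Since $G_i$ is independent of the components of $W_i$ (by Assumption \ref{main:ass:IV}(i) together with the construction of $W_i$), partialling $W_i$ out of $\1\{G_i = H\}$ leaves the centered instrument $\1\{G_i = H\} - \P(G_i = H)$, and standard algebra collapses $\gamma^{LH}$ to
\[
\gamma^{LH} = \frac{\E[Y_i \mid G_i = H] - \E[Y_i \mid G_i = L]}{\E[\phi(B_{i1}) \mid G_i = H] - \E[\phi(B_{i1}) \mid G_i = L]}.
\]
The full-rank and nondegenerate treatment-share conditions in (i) guarantee the first stage is well defined, and condition (ii) guarantees the denominator is nonzero.

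Next I would use independence and exclusion (Assumption \ref{main:ass:IV}) together with Assumption \ref{main:ass:parametric} to rewrite both conditional expectations in potential-outcome form: $\E[Y_i \mid G_i = g] = \E[Y_i^{\phi}(\phi(B_{i1}^{g}))]$ and $\E[\phi(B_{i1}) \mid G_i = g] = \E[\phi(B_{i1}^{g})]$. Then, for each $i$, I would apply the fundamental theorem of calculus (which requires $Y_i^{\phi}$ to be continuously differentiable, as assumed) to obtain
\[
Y_i^{\phi}(\phi(B_{i1}^{H})) - Y_i^{\phi}(\phi(B_{i1}^{L})) = \int_{\phi(B_{i1}^{L})}^{\phi(B_{i1}^{H})} \partial_v Y_i^{\phi}(v)\, dv.
\]
Convexity of $\phi(\B)$ (Assumption \ref{main:ass:parametric}) ensures the integration path lies inside the domain of $Y_i^{\phi}$, so $\partial_v Y_i^{\phi}(v)$ is defined at every $v$ on the path.

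I would then absorb the signed length of the integration interval into the factor $|\phi(B_{i1}^{H}) - \phi(B_{i1}^{L})|\,\psi_i^{LH}$ and renormalize the integrand by the uniform density $\lambda_i^{LH}(v)$, yielding
\[
Y_i^{\phi}(\phi(B_{i1}^{H})) - Y_i^{\phi}(\phi(B_{i1}^{L})) = |\phi(B_{i1}^{H}) - \phi(B_{i1}^{L})|\,\psi_i^{LH}\,\Bar{\beta}_i^{LH}.
\]
An analogous identity for the denominator gives $\phi(B_{i1}^{H}) - \phi(B_{i1}^{L}) = |\phi(B_{i1}^{H}) - \phi(B_{i1}^{L})|\,\psi_i^{LH}$. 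Taking expectations and dividing produces exactly \eqref{main:eq:active.baseline}.

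The only subtlety I anticipate is sign bookkeeping on the event $\phi(B_{i1}^{H}) = \phi(B_{i1}^{L})$: there $\psi_i^{LH}=0$ and $\lambda_i^{LH}$ is not defined as a standard density, but both the numerator integral and the denominator contribution vanish, so such agents contribute zero to both sides of the ratio and can be handled by a convention (e.g., setting $\Bar{\beta}_i^{LH} := 0$ on this null set). Everything else is routine TSLS algebra, and no new conditions beyond those already invoked in Proposition \ref{main:prop:passive.baseline} are needed; indeed Proposition \ref{main:prop:active.baseline} is just the specialization to a single, agent-invariant interaction $I_i \equiv 1$, which eliminates the $I_i'\pi$ factor from the weight.
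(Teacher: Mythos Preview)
Your proposal is correct and follows essentially the same approach as the paper: the paper simply packages your Frisch--Waugh--Lovell/Wald reduction into a general lemma (applied here with $I_i\equiv 1$) and your fundamental-theorem-of-calculus step into a separate lemma, then invokes both in two lines. Your inline derivation, including the remark that the active case is the $I_i\equiv 1$ specialization and the handling of the degenerate $\phi(B_{i1}^{H})=\phi(B_{i1}^{L})$ event, matches the paper's argument exactly.
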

Active control specifications recover an APE that is constructed analogously to the passive control case. The weights $\lambda_{i}^{LH}(v)$ for the feature values are non-negative and integrate to one. The weights $w_{i}^{LH}$ for the agents integrate to one, but can be negative due to $\psi_{i}^{LH}$.

\subsection{Takeaways}\label{main:sec:takeaways.specifications}
Thus far, we assumed that (i) group assignment is a valid instrument for posterior beliefs; and (ii) actions depend on beliefs through a single feature of interest. Under these assumptions, we showed that TSLS specifications from the literature recover APEs $\beta^{CT}$ and $\beta^{LH}$ with weights $w_{i}^{CT}$ and $w_{i}^{LH}$ that can be negative. This negative weighting compromises the causal interpretation of $\beta^{CT}$ and $\beta^{LH}$. For instance, if some of the weights are negative, then it is possible for $\beta^{CT}$ and $\beta^{LH}$ to be negative even when all partial effects are positive \citep{blandhol2022tsls}. To address this concern, we now place structure on belief updating.

\section{Interpretation}\label{main:sec:interpretation}
The weights $w_{i}^{g\tilde{g}}$ are contaminated by $\psi_{i}^{g\tilde{g}} := \text{sign}(\phi(B_{i1}^{\tilde{g}}) - \phi(B_{i1}^{g}))$. This source of contamination suggests that we should restrict the ``direction'' of agents' responses to information provision. Section \ref{main:sec:signal.monotonicity} introduces a \textit{signal monotonicity} condition that makes this idea precise. To achieve IV monotonicity for passive control, Section \ref{main:sec:stability} additionally introduces a \textit{control-group stability} condition---Section \ref{main:sec:passive.parameters} then characterizes $w_{i}^{CT}$. To achieve IV monotonicity for active control, Section \ref{main:sec:neutrality} additionally introduces a \textit{treatment-group neutrality} condition---Section \ref{main:sec:active.parameters} then characterizes $w_{i}^{LH}$.

\subsection{Signal Monotonicity}\label{main:sec:signal.monotonicity}
For agent $i$ in group $g$, let $B_{i1}^{g}(\cdot|s)$ be the belief updating rule, which maps signals $s \in \R \cup \varnothing$ to beliefs $B \in \B$. Note that $B_{i1}^{g}(\cdot|s)$ implicitly depends on priors $B_{i0}$, and $B_{i1}^{g} \equiv B_{i1}^{g}(\cdot|S_{i}^{g})$ in the realized experiment.

\begin{assumption}[Signal Monotonicity]\label{main:ass:signal.monotonicity}
If $s' > s$, then $\phi(B_{i1}^{g}(\cdot|s')) \geq \phi(B_{i1}^{g}(\cdot|s))$.
\end{assumption}

Assumption \ref{main:ass:signal.monotonicity} restricts the direction of agents' responses to information provision; intuitively, agents should update ``towards'' the signal. This assumption is reasonable in many experiments. For example, in the setting of \cite{kumar2023effect}, firms in one treatment group are given the average forecasts of mean GDP from a panel of experts. It is reasonable to assume that larger average forecasts lead firms to form higher expectations of mean GDP. In another treatment group, firms are given the difference in forecasts between the most and least optimistic experts. It is reasonable to assume that larger dispersion in forecasts leads firms to be more uncertain in their beliefs.

However, Assumption \ref{main:ass:signal.monotonicity} can sometimes be invalid. For example, consider \cite{coibion2022monetary}, who study households' expectations of inflation. In one treatment group, they provide information about unemployment. Some households may believe that higher unemployment implies lower inflation, whereas others may believe the opposite. Therefore, signal monotonicity is unreasonable for this treatment group. 

As demonstrated, we can often intuit the validity of signal monotonicity. To complement this intuition, Appendix \ref{main:sec:posterior.formation} formalizes conditions under which Assumption \ref{main:ass:signal.monotonicity} is satisfied for the leading case of $\phi = \mu$. In summary, the main condition is that agents perceive the signals as realizations from distributions that satisfy a monotone likelihood ratio (MLR) property. In particular, agents assigned to treatment assume that the experiment tends to provide larger signals under larger realizations of the state.

The MLR property is satisfied for the set of Gaussian beliefs considered in the information provision literature---see Appendix \ref{main:sec:gaussian.model}. More generally, many exponential families satisfy the MLR property in their respective sufficient statistics \citep{casella2021statistical}. Given the MLR structure on agents' perceptions, we show in Appendix \ref{main:sec:posterior.formation} that Assumption \ref{main:ass:signal.monotonicity} is satisfied for $\phi = \mu$ under various belief updating rules, including (i) the Bayesian baseline considered in the information provision literature \citep{armantier2016price, cavallo2017inflation, armona2019home, cullen2022much, fuster2022expectations, balla2022determinants}; and (ii) systematic deviations from Bayesian updating considered in the behavioral economics literature \citep{gabaix2019behavioral, benjamin2019errors}. 

\subsection{Stability}\label{main:sec:stability}
We first consider passive control comparisons. Let $S_{i}^{\phi} := \phi(B_{i0})$ denote agent $i$'s prior feature.

\begin{definition}[Stability]
A passive control comparison is \textit{control-group stable} if
\begin{align*}
    \phi(B_{i1}^{T}(\cdot|S^{\phi}_{i})) = \phi(B_{i1}^{C}(\cdot|S^{C}_{i})).
\end{align*}
\end{definition}
If a passive control comparison is control-group stable, then the provision of $s = S_{i}^{\phi}$ in the treatment group is the same as not providing any information in the control group. Notice
\begin{align*}
    \psi_{i}^{CT} = \text{sign}([\phi(B_{i1}^{T}(\cdot|S^{T}_{i})) - \phi(B_{i1}^{T}(\cdot|S^{\phi}_{i}))] + [\phi(B_{i1}^{T}(\cdot|S^{\phi}_{i})) - \phi(B_{i1}^{C}(\cdot|S^{C}_{i}))]).
\end{align*}
Thus, under control-group stability, signal monotonicity implies
\begin{align*}
    \text{sign}(S_{i}^{T} - \phi(B_{i0}))\psi_{i}^{CT} = \text{sign}(S_{i}^{T} - \phi(B_{i0}))\text{sign}(\phi(B_{i1}^{T}(\cdot|S^{T}_{i})) - \phi(B_{i1}^{T}(\cdot|S^{\phi}_{i}))) \geq 0.
\end{align*}
In particular, with an appropriate TSLS specification, we can use the sign of the perception gap to correct for the contamination from $\psi_{i}^{CT}$. Thus, signal monotonicity and control-group stability correspond to ``weak'' IV monotonicity, which allows the direction of canonical IV monotonicity to vary across covariate values \citep{sloczynski2020should, blandhol2022tsls}.

The stability condition assumes that receiving information ``consistent'' with one's prior feature is equivalent to not receiving any information, in the sense that the posterior features are the same. For example, consider an agent that (i) has prior expectations of 3\% for inflation; and (ii) without additional information, maintains their prior beliefs. Stability assumes that this agent's posterior expectations remain at 3\% if given a signal that inflation will be 3\%.\footnote{Here it is fine if the agent's uncertainty decreases---we only need expectations to be stable.}

The stability condition restricts the signal to be directly comparable to the state of interest. For example, consider a design in which the elicited priors and posteriors are expectations $\mu$ over the extent of discrimination against Black individuals in the housing market, and the signal measures the extent of discrimination against Black individuals in the labor market. This treatment group plausibly satisfies signal monotonicity: It is reasonable that agents believe that higher levels discrimination against Black individuals in the labor market tend to imply higher levels of discrimination in the housing market. However, the value of $\mu(B_{i0})$ refers to a different state (labor market) than $S_{i}^{T}$ (housing market). In particular, being told that the level of discrimination in the housing market is $S_{i}^{\mu} := \mu(B_{i0})$ need not confirm one's priors over that discrimination.\footnote{An alternative design is to elicit prior expectations over the state corresponding to the signal, as is done in \cite{haaland2023racial}, who estimate an intent-to-treat effect. Our framework considers a common state space $\Omega$ for the prior and posterior for simplicity---to our knowledge, no existing paper that provides TSLS estimates uses the alternative design in \cite{haaland2023racial}.}

The stability condition limits the extent to which agents may update their beliefs beyond the signal provided, between the elicitation of priors and posteriors. For example, if agent $i$ searches for information with greater intensity when assigned to control than when assigned to treatment with a signal $s = S_{i}^{\phi}$ that ``confirms'' their priors, then $\phi(B_{i1}^{T}(\cdot|S^{\phi}_{i})) \neq \phi(B_{i1}^{C})$. However, in such cases it suffices to assume a weaker version of stability:
\begin{align*}
    |\phi(B_{i1}^{T}(\cdot|S^{\phi}_{i})) - \phi(B_{i1}^{C}(\cdot|S_{i}^{C}))| \leq |\phi(B_{i1}^{T}(\cdot|S_{i}^{T})) - \phi(B_{i1}^{T}(\cdot|S^{\phi}_{i}))|.
\end{align*}
This allows agents to acquire outside information \citep{cavallo2017inflation, armona2019home}, provided that the difference in $\phi$ from treatment with $s = S_{i}^{T}$ versus $s = S_{i}^{\phi}$ is large enough.

\subsection{Passive Control Weights}\label{main:sec:passive.parameters}
Proposition \ref{main:prop:passive.MLRP}, which follows from the analysis in Section \ref{main:sec:stability}, characterizes $w_{i}^{CT}$.

\begin{proposition}\label{main:prop:passive.MLRP}
Let Assumption \ref{main:ass:signal.monotonicity} be satisfied. If the passive control comparison is control-group stable and $S_{i}^{T} \neq \phi(B_{i0})$, then 
\begin{align*}
     w_{i}^{CT} &\propto |\phi(B_{i1}^{T}) - \phi(B_{i1}^{C})|, & I_{i}^{sign} &:= \text{sign}(S_{i}^{T} - \phi(B_{i0})), \\
     w_{i}^{CT} &\propto |\phi(B_{i1}^{T}) - \phi(B_{i1}^{C})| \times |S_{i}^{T} - \phi(B_{i0})|, & I_{i}^{gap} &:= S_{i}^{T} - \phi(B_{i0}), \\
     w_{i}^{CT} &\propto |\phi(B_{i1}^{T}) - \phi(B_{i1}^{C})| \times (\pi_{1}\psi_{i}^{CT} + \pi_{2}|S_{i}^{T} - \phi(B_{i0})|), & I_{i}^{1 + gap} &:= (1, S_{i}^{T} - \phi(B_{i0}))', \\
     w_{i}^{CT} &\propto |\phi(B_{i1}^{T}) - \phi(B_{i1}^{C})| \times (\pi_{1}\psi_{i}^{CT} + \pi_{2}\psi_{i}^{CT}\phi(B_{i0})), & I_{i}^{1 + prior} &:= (1, \phi(B_{i0}))'.
\end{align*} 
\end{proposition}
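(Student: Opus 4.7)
The plan is to derive each of the four proportionality statements by starting from the baseline expression in Proposition \ref{main:prop:passive.baseline}, namely $w_{i}^{CT} \propto |\phi(B_{i1}^{T}) - \phi(B_{i1}^{C})|\psi_{i}^{CT}I_{i}'\pi$, and then plugging in the appropriate $I_{i}$ after simplifying the contamination term $\psi_{i}^{CT}$. The key intermediate step is to show the sign identity
\[
    |\phi(B_{i1}^{T}) - \phi(B_{i1}^{C})|\,\psi_{i}^{CT} \;=\; |\phi(B_{i1}^{T}) - \phi(B_{i1}^{C})|\,\text{sign}(S_{i}^{T} - \phi(B_{i0})),
\]
which is exactly the combination of signal monotonicity and control-group stability worked out in the display just above this proposition in Section \ref{main:sec:stability}. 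Since $\phi(B_{i1}^{C}) = \phi(B_{i1}^{T}(\cdot|S^{\phi}_{i}))$ by stability, signal monotonicity implies $\psi_{i}^{CT}$ and $\text{sign}(S_{i}^{T} - \phi(B_{i0}))$ carry the same sign whenever the first-stage response is nontrivial; the assumption $S_{i}^{T} \neq \phi(B_{i0})$ rules out the only edge case where $\text{sign}(S_{i}^{T} - \phi(B_{i0}))$ could vanish while $\psi_{i}^{CT}$ is $\pm 1$, and when $\phi(B_{i1}^{T}) = \phi(B_{i1}^{C})$ the leading $|\cdot|$ factor is zero so any choice of $\psi_{i}^{CT}$ is inconsequential.

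With this identity in hand, each case is a short algebraic substitution. For $I_{i}^{sign}$, the scalar $\pi$ cancels from the proportionality and $\psi_{i}^{CT}\text{sign}(S_{i}^{T}-\phi(B_{i0}))$ collapses to $1$ on the effective support, giving $w_{i}^{CT} \propto |\phi(B_{i1}^{T}) - \phi(B_{i1}^{C})|$. For $I_{i}^{gap}$, I would write $S_{i}^{T} - \phi(B_{i0}) = \text{sign}(S_{i}^{T}-\phi(B_{i0}))\cdot|S_{i}^{T}-\phi(B_{i0})|$, apply the identity, and collect terms to obtain the product of absolute values. For $I_{i}^{1+gap}$, I would expand $I_{i}'\pi = \pi_{1} + \pi_{2}(S_{i}^{T} - \phi(B_{i0}))$, distribute $\psi_{i}^{CT}$ across the sum, and apply the identity only to the second term (the first term retains $\psi_{i}^{CT}$, explaining why it appears in the stated formula). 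For $I_{i}^{1+prior}$, I would similarly distribute $\psi_{i}^{CT}$ across $\pi_{1} + \pi_{2}\phi(B_{i0})$; here the identity cannot simplify either summand because $\phi(B_{i0})$ need not carry the same sign as $\psi_{i}^{CT}$, which is why both summands still contain $\psi_{i}^{CT}$.

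There is no serious obstacle: the conceptual work has already been done in Section \ref{main:sec:stability}, and what remains is careful bookkeeping of signs. The one subtle point I would want to flag explicitly in the write-up is that the identity is an equality of expressions multiplied by $|\phi(B_{i1}^{T}) - \phi(B_{i1}^{C})|$, not of $\psi_{i}^{CT}$ and $\text{sign}(S_{i}^{T}-\phi(B_{i0}))$ pointwise; this matters for interactions that leave $\psi_{i}^{CT}$ unpaired (cases 3 and 4), since there one really is picking up the sign of the first-stage movement and not of the perception gap.
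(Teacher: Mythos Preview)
Your proposal is correct and follows exactly the paper's approach: the paper's one-line proof records the key identity $\text{sign}(S_{i}^{T} - \phi(B_{i0}))\psi_{i}^{CT} = |\psi_{i}^{CT}|$ (equivalent to your displayed identity once multiplied through by $|\phi(B_{i1}^{T}) - \phi(B_{i1}^{C})|$), and leaves the four substitutions into the Proposition~\ref{main:prop:passive.baseline} weight formula implicit. Your write-up simply makes those substitutions explicit, including the observation that in cases~3 and~4 the residual $\psi_{i}^{CT}$ factors cannot be simplified away---this is entirely consistent with, and a more detailed version of, the paper's argument.
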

Thus, under signal monotonicity and control-group stability, only \textit{some} of the passive control specifications from the literature recover positive-weighted APEs. The weights from $I_{i}^{sign}$ are positive, and proportional to the absolute difference in $i$'s counterfactual posterior features across treatment and control.\footnote{If $S_{i}^{T} = \phi(B_{i0})$ for some $i$, then the weights for $I_{i}^{sign}$ are $w_{i}^{CT} \propto |\phi(B_{i1}^{T}) - \phi(B_{i1}^{C})|\1\{S_{i}^{T} \neq \phi(B_{i0})\}$.} Relative to $I_{i}^{sign}$, interaction $I_{i}^{gap}$ additionally up-weights agents with larger absolute perception gaps. Relative to $I_{i}^{sign}$ and $I_{i}^{gap}$, interactions $I_{i}^{1,gap}$ and $I_{i}^{1,prior}$ allow for negative weights. Intuitively, the signs of the scalars $I_{i}^{sign}$ and $I_{i}^{gap}$ correctly predict $\psi_{i}^{CT}$. On the other hand, the vectors $I_{i}^{1,gap}$ and $I_{i}^{1,prior}$ induce linear combinations $I_{i}'\pi$ that need not correctly predict $\psi_{i}^{CT}$.

\subsection{Neutrality}\label{main:sec:neutrality}
We now consider active control comparisons.

\begin{definition}[Neutrality]
An active control comparison is \textit{treatment-group neutral} if
\begin{align*}
    \phi(B_{i1}^{H}(\cdot|S^{L}_{i})) = \phi(B_{i1}^{L}(\cdot|S^{L}_{i})).
\end{align*}
\end{definition}
If an active control comparison is treatment-group neutral, then the provision of $s = S_{i}^{L}$ in both treatment groups leads agent $i$ to form the same posterior features. Notice
\begin{align*}
    \psi_{i}^{LH} = \text{sign}([\phi(B_{i1}^{H}(\cdot|S^{H}_{i})) - \phi(B_{i1}^{H}(\cdot|S^{L}_{i}))] + [\phi(B_{i1}^{H}(\cdot|S^{L}_{i})) - \phi(B_{i1}^{L}(\cdot|S^{L}_{i}))]).
\end{align*}
Thus, under treatment-group neutrality, signal monotonicity implies
\begin{align*}
    \text{sign}(S^{H}_{i} - S^{L}_{i})\psi_{i}^{LH} = \text{sign}(S^{H}_{i} - S^{L}_{i})\text{sign}(\phi(B_{i1}^{H}(\cdot|S^{H}_{i})) - \phi(B_{i1}^{H}(\cdot|S^{L}_{i}))) \geq 0.
\end{align*}
In particular, we can use $\text{sign}(S^{H}_{i} - S^{L}_{i})$ to correct for the contamination from $\psi_{i}^{LH}$. However, since $S_i^H > S_i^L$ for all $i$, then $\text{sign}(S^{H}_{i} - S^{L}_{i}) = 1$ so that existing active control specifications ``automatically'' make the correction. Therefore, signal monotonicity and treatment-group neutrality correspond to the canonical IV monotonicity condition \citep{imbens1994identification, angrist2000interpretation}.

Intuitively, the neutrality condition assumes that agents perceive the provision of information to be ``equally credible'' across treatment groups.\footnote{It suffices to assume that the belief updating rules $B_{i1}^{g}(\cdot|s)$ are the same across $g \in \{L, H\}$.} This equality may not hold when there are asymmetric effects of information provision across treatment groups. For example, agents may perceive information provision to be less credible in groups that provide information that more strongly challenges their prior beliefs \citep{gentzkow2006, haaland2023designing}. Such issues can be mitigated with appropriate experimental design, including framing the information as coming from similar sources and using the same presentation strategies across treatment groups. Furthermore, as with control-group stability, we can also entertain a weaker version of neutrality:
\begin{align*}
    |\phi(B_{i1}^{H}(\cdot|S^{L}_{i})) - \phi(B_{i1}^{L}(\cdot|S^{L}_{i}))| \leq |\phi(B_{i1}^{H}(\cdot|S^{H}_{i})) - \phi(B_{i1}^{H}(\cdot|S^{L}_{i}))|.
\end{align*}
This allows for some perceived asymmetry, provided that the difference in $\phi$ from treatment with $s = S_{i}^{H}$ versus $s = S_{i}^{L}$ is large enough.

\subsection{Active Control Weights}\label{main:sec:active.parameters}
Proposition \ref{main:prop:active.MLRP}, which follows from the analysis in Section \ref{main:sec:neutrality}, characterizes $w_{i}^{LH}$.

\begin{proposition}\label{main:prop:active.MLRP}
Let Assumption \ref{main:ass:signal.monotonicity} be satisfied. If the active control comparison is treatment-group neutral, then 
\begin{align*}
     w_{i}^{LH} \propto |\phi(B_{i1}^{H}) - \phi(B_{i1}^{L})|.
\end{align*}
\end{proposition}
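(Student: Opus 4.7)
The plan is to combine the explicit formula for $w_{i}^{LH}$ given in Proposition \ref{main:prop:active.baseline} with the telescoping identity for $\psi_{i}^{LH}$ already displayed in Section \ref{main:sec:neutrality}. Since Proposition \ref{main:prop:active.baseline} yields
\[
w_{i}^{LH} = \frac{|\phi(B_{i1}^{H}) - \phi(B_{i1}^{L})|\psi_{i}^{LH}}{\E[|\phi(B_{i1}^{H}) - \phi(B_{i1}^{L})|\psi_{i}^{LH}]},
\]
the entire task reduces to verifying that $\psi_{i}^{LH} \geq 0$ with probability one under the maintained assumptions; once that is established, the contamination factor drops out and the stated proportionality follows immediately.

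To show $\psi_{i}^{LH} \geq 0$, I would start from the insertion-subtraction identity
\[
\phi(B_{i1}^{H}) - \phi(B_{i1}^{L}) = \bigl[\phi(B_{i1}^{H}(\cdot|S^{H}_{i})) - \phi(B_{i1}^{H}(\cdot|S^{L}_{i}))\bigr] + \bigl[\phi(B_{i1}^{H}(\cdot|S^{L}_{i})) - \phi(B_{i1}^{L}(\cdot|S^{L}_{i}))\bigr].
\]
Treatment-group neutrality kills the second bracket by definition. The active control design guarantees $S_{i}^{H} > S_{i}^{L}$, so signal monotonicity (Assumption \ref{main:ass:signal.monotonicity}) applied within the group-$H$ updating rule makes the first bracket non-negative. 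Therefore $\phi(B_{i1}^{H}) \geq \phi(B_{i1}^{L})$ almost surely and $\psi_{i}^{LH} \in \{0,1\}$.

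Finally, I would substitute back into the formula for $w_{i}^{LH}$: when $\phi(B_{i1}^{H}) > \phi(B_{i1}^{L})$, we have $|\phi(B_{i1}^{H}) - \phi(B_{i1}^{L})|\psi_{i}^{LH} = |\phi(B_{i1}^{H}) - \phi(B_{i1}^{L})|$; the knife-edge case $\phi(B_{i1}^{H}) = \phi(B_{i1}^{L})$ contributes zero on both sides, so the identity extends. Hence both the numerator and denominator of $w_{i}^{LH}$ collapse to $|\phi(B_{i1}^{H}) - \phi(B_{i1}^{L})|$ and its expectation, respectively. I do not foresee a real obstacle: all the substantive content has already been developed in Section \ref{main:sec:neutrality}, and the remaining argument is essentially a bookkeeping step that records, at the level of $w_{i}^{LH}$, the canonical IV monotonicity implication of combining signal monotonicity with treatment-group neutrality.
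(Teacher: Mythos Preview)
Your proposal is correct and follows essentially the same route as the paper: the paper's proof is the one-liner ``Assumption \ref{main:ass:signal.monotonicity} and neutrality imply $\psi_{i}^{LH} = \text{sign}(S_{i}^{H} - S_{i}^{L}) = 1$,'' which is precisely the telescoping argument from Section \ref{main:sec:neutrality} you reproduce. If anything, you are slightly more careful than the paper in explicitly handling the knife-edge case $\phi(B_{i1}^{H}) = \phi(B_{i1}^{L})$, where $\psi_{i}^{LH}=0$ rather than $1$, but this does not affect the proportionality claim.
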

In particular, under signal monotonicity and treatment-group neutrality, existing active control specifications recover positive-weighted APEs. The weights $w_{i}^{LH}$ are proportional to the absolute difference in agents' posterior features across treatment groups, which parallels $w_{i}^{CT}$ for the passive control specification with interaction $I_{i}^{sign}$. 

\section{Practical Recommendations}\label{main:sec:recs}

\subsection{Passive versus Active Control}
Given our framework, when deciding between active and passive control designs, researchers should consider how the associated comparisons $\phi(B_{i1}^{H}) - \phi(B_{i1}^{L})$ and $\phi(B_{i1}^{T}) - \phi(B_{i1}^{C})$ differ in (i) the causal parameters they recover; and (ii) the identification constraints they place on the design of information provision.

Within a given agent $i$, passive control aggregates the partial effects $\partial_{v}Y_{i}^{\phi}(v)$ over the range of values $v$ between $\phi(B_{i1}^C)$ and $\phi(B_{i1}^T)$, directly answering the question ``what is the effect of giving agent $i$ information, relative to not giving them information?'' In contrast, active control aggregates over values $v$ between $\phi(B_{i1}^L)$ and $\phi(B_{i1}^H)$, answering the question ``what is the effect of giving agent $i$ one piece of information relative to another?'' The latter counterfactual may have less policy relevance, as noted in \cite{haaland2023designing}.

Notably, even if $|\phi(B_{i1}^{T}) - \phi(B_{i1}^{C})| = |\phi(B_{i1}^{H}) - \phi(B_{i1}^{L})|$, we can still have $\beta^{CT} \neq \beta^{LH}$. For a simple illustration of this difference, let us assume that $S_i^T = S_i^H$, such that $B_{i1}^T = B_{i1}^C = B_{i1}^*$. If we trace out a partial effects curve $\partial_{v}Y_{i}^{\phi}(v)$ as in Figure \ref{fig:example}, then $\Bar{\beta}^{LH}_{i}$ is proportional to the green region and $\Bar{\beta}^{CT}_{i}$ is proportional to the orange region. Due to the shape of the partial effects curve for this agent, $\Bar{\beta}^{CT}_{i} > \Bar{\beta}^{LH}_{i}$. Thus, even if $|\phi(B_{i1}^{T}) - \phi(B_{i1}^{C})| = |\phi(B_{i1}^{H}) - \phi(B_{i1}^{L})|$ for all $i$, we can still have $\beta^{CT} \neq \beta^{LH}$.

\begin{figure}[h!]
    \centering
    \includegraphics[width=0.5\textwidth]{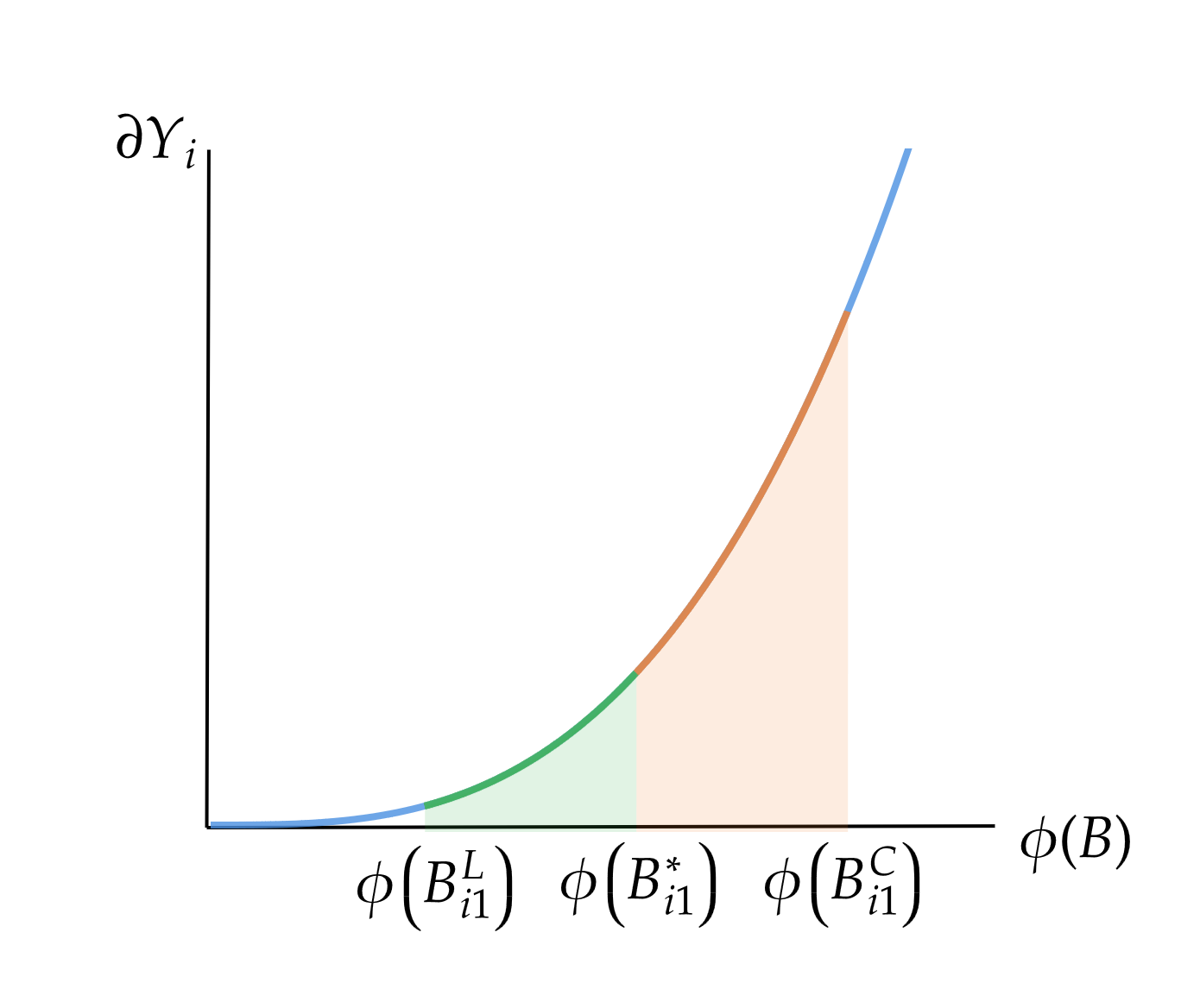}
    \caption{Comparing Active and Passive Control Counterfactuals}
    \label{fig:example}
    \floatfoot{\textit{Note:} The blue curve illustrates the partial effects curve $\partial_{v}Y_{i}^{\phi}(v)$ for $i$. We let $B_{i1}^* = B_{i1}^T = B_{i1}^H$. Therefore, the green region is proportional to $\Bar{\beta}^{LH}_{i}$---the within-agent APE associated with the active control comparison for agent $i$. The orange region is proportional to $\Bar{\beta}^{CT}_{i}$---the within-agent APE associated with the passive control comparison for agent $i$.} 
\end{figure}

Passive and active control also differ in how they aggregate partial effects across agents. The latter weights agents according to $|\phi(B_{i1}^H) - \phi(B_{i1}^L)|$, while the former with interaction $I_i^{sign}$ weights agents according to $|\phi(B_{i1}^T) - \phi(B_{i1}^C)|$.\footnote{We focus on $I_i^{sign}$ because it induces weights $w_{i}^{CT}$ that are most analogous to $w_{i}^{LH}$.} Because agents assigned to control are not given information, $|\phi(B_{i1}^T) - \phi(B_{i1}^C)|$ is more sensitive to agents' priors. When $|\phi(B_{i1}^T) - \phi(B_{i1}^C)|$ is more heterogeneous than $|\phi(B_{i1}^H) - \phi(B_{i1}^L)|$, passive control places less equal weights across agents; this is more likely in settings with highly  heterogeneous priors---see Appendix \ref{main:sec:gaussian.model}.  Therefore, although active control generates a potentially less policy-relevant counterfactual, it may place more equal weights across agents, which may be desirable in some settings \citep{balla2023identifying}. This analysis informs discussions from \cite{roth2020expectations}
and \cite{haaland2023designing}, which posit that active control identifies causal effects for a ``broader population'' of agents. This claim is not exactly true; rather, it is the relative weight on agents with differing priors that active control may better equalize.

Finally, active control requires treatment-group neutrality, whereas passive control requires control-group stability. Under neutrality, $S_i^L$ and $S_i^H$ must be perceived to be equally credible; this imposes design constraints on how distinct the two signals can be (since more extreme signals may lead to loss of credibility), and what sources may be used to generate the signals. Under stability, $S_i^T$ must convey similar numerical content to the prior feature (i.e., same units for the signals as the prior features), and agents should not differentially search for additional information between prior and posterior elicitation, dependent on group assignment. 

\subsection{Choosing a Passive Control Specification}
Among the four passive control specifications, interactions $I_i^{sign}$ and $I_i^{gap}$ generate positive weights, whereas $I_i^{1,gap}$ and $I_i^{1,prior}$ do not. Therefore, the latter two specifications should not be used: Agents' partial effects may enter \textit{negatively} into the weighted average, attenuating or reversing the sign of the estimated APE. 

Between $I_i^{sign}$ and $I_i^{gap}$, the former directly maps to canonical formulations of the local average treatment effect for continuous endogenous variables \citep{angrist2000interpretation}, which weights by the size of agents' ``first-stages.'' In this setting, these agent first-stages are given by $|\phi(B_{i1}^{T}) - \phi(B_{i1}^{C})|$. However, if researchers are especially interested in agents with large perception gaps, then $I_i^{gap}$ may generate a more useful APE. Moreover, $I_i^{sign}$ and  $I_i^{gap}$ may provide different levels of estimation precision. 

\section{Applications}\label{main:sec:applications}
We now compare the different passive control interactions using data from \cite{kumar2023effect} and \cite{jager2022worker}. We show that the chosen specification affects the magnitude and significance of estimates, and provide evidence that these effects are driven by contaminated weights in the former setting, and by up-weighting of agents with larger perception gaps in the latter setting. 

\subsection{Kumar et al. (2023)}
\cite{kumar2023effect} study how firms' expectations and uncertainty of future GDP growth affect their economic decisions. We compare two arms in their experiment: the control arm, which receives no information, and their first treatment arm, which tells firms that the average prediction of GDP growth among a panel of professional forecasters is 4\%. \cite{kumar2023effect} elicit agents' expectations and uncertainty---we focus on expectations.\footnote{As noted in Section \ref{main:sec:setup}, econometric issues arise when there is more than one endogenous variable; hence, we avoid including both expectations and uncertainty.} The outcome variable $Y_{i}$ is defined as the difference between realized and planned changes in business choices, such as prices and employment. 

Figure \ref{fig:kumar-coeff} plots TSLS coefficients for various outcomes: price, employment, and advertising budget. Other outcomes are given in Appendix \ref{main:sec:appendix.applications.1}. Since all firms in the treatment group receive the same professional forecast, interactions $I_i^{1,gap}$ and $I_i^{1,prior}$ provide the same estimates.

\begin{figure}[h!]
    \centering
    \includegraphics[width = \textwidth]{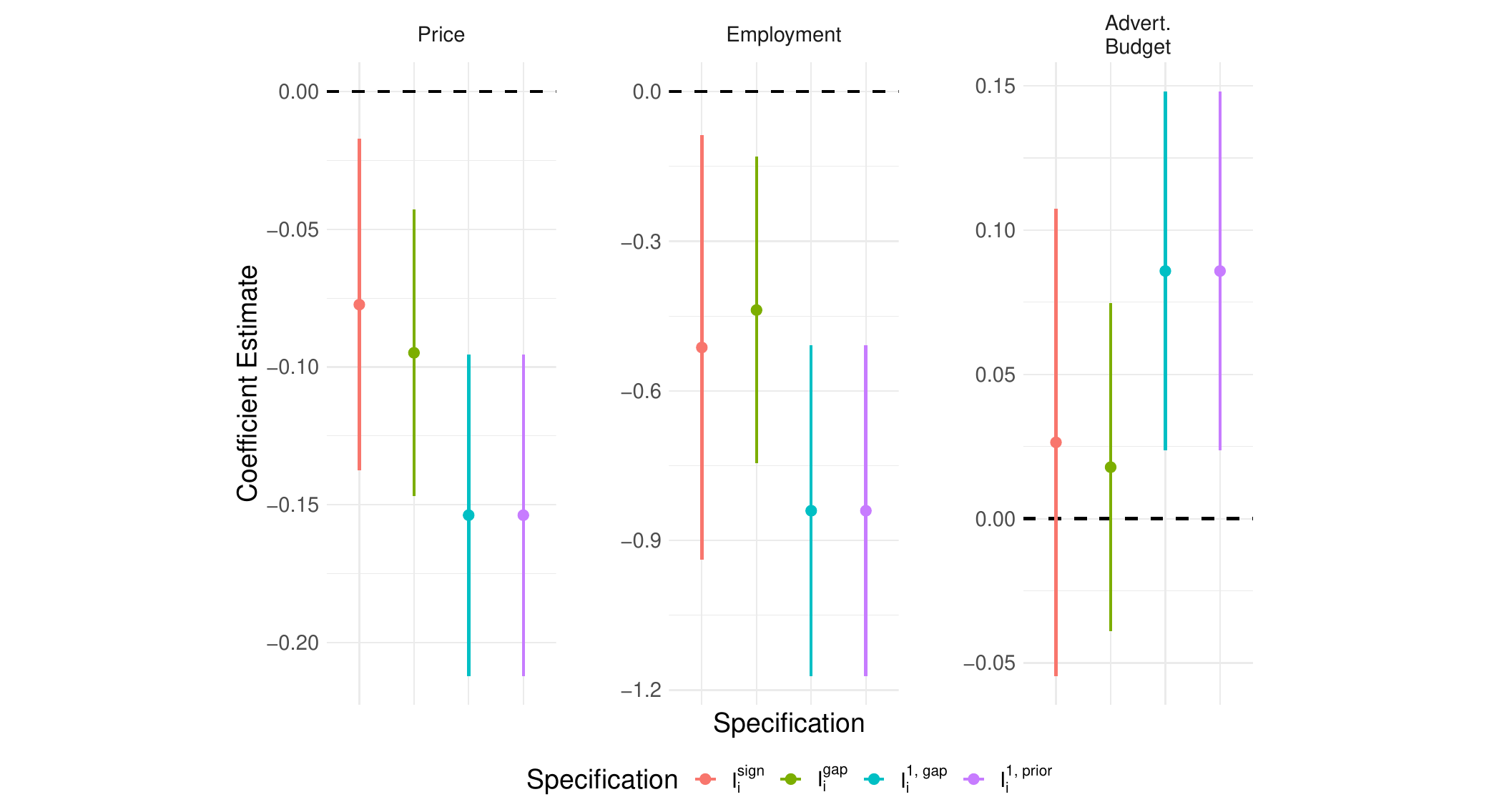}
    \caption{Coefficient Estimates---Data from \cite{kumar2023effect}}
    \label{fig:kumar-coeff}
    \floatfoot{\textit{Note:} The points are point estimates and the bars are 95\% confidence intervals. The outcomes $Y_{i}$ are defined as the difference between the planned change and the actualized change, i.e., the planned change in price versus the actualized changed in price. Price is the price of the firm's main product, employment is total employment at the firm, and advert. budget is the advertising budget of the firm. The signal $S_{i}^{T}$ is the average prediction of 4\% GDP growth among professional forecasters, the prior feature $\mu(B_{i0})$ is pre-treatment GDP growth expectations, and the posterior feature $\mu(B_{i1})$ is post-treatment GDP growth expectations. $I_i^{sign}$ regresses the outcome on the sign of the perception gap and posterior GDP growth expectations, instrumenting the posterior by the treatment indicator times the sign of the perception gap $S_i^T - \mu(B_{i0})$. $I_i^{gap}$ regresses the outcome on the perception gap $S_i^T - \mu(B_{i0})$ and the posterior, instrumenting the posterior by the treatment indicator times the perception gap. $I_i^{1, gap}$ regresses the outcome on the perception gap and the posterior, instrumenting the posterior by the treatment indicator and the treatment indicator times the perception gap. $I_i^{1, prior}$ regresses the outcome on the prior and the posterior, instrumenting the posterior by the treatment indicator and the treatment indicator times the prior. In all specifications, the coefficient of interest is the coefficient on the posterior expectation.}
\end{figure}

For price, the magnitude of the coefficient for interaction $I_i^{sign}$ is halved and the magnitude of the coefficient for interaction $I_i^{gap}$ is reduced by one-third, relative to the coefficients for interactions $I_i^{1,gap}$ and $I_i^{1,prior}$. A t-test of these coefficients finds that $I_i^{sign}$ is marginally significantly different (p-value = 0.078) from $I_i^{1,gap}$ and $I_i^{1,prior}$. Similarly, for employment, the magnitude of the coefficients for $I_i^{sign}$ and $I_i^{gap}$ are 60\% and 50\% of the other coefficients; $I_i^{gap}$ is marginally significantly different (p-value = 0.081) from $I_i^{1,gap}$ and $I_i^{1,prior}$. Finally, for advertising budget, the magnitudes of the coefficients for $I_i^{sign}$ and $I_i^{gap}$ are a third of the coefficients for interactions $I_i^{1,gap}$ and $I_i^{1,prior}$. In fact, under $I_i^{sign}$ and $I_i^{gap}$, the 95\% confidence interval on the advertising budget estimate include zero effects.

What explains the above differences? Recall that interactions $I_i^{sign}$ and $I_i^{gap}$, which estimate coefficients of similar magnitude, both generate positive weights. In contrast, interactions $I_i^{1,gap}$ and  $I_i^{1,prior}$, which estimate coefficients that are twice as large, both allow for negative weights. Although we cannot formally reject the non-existence of negative weights for interactions $I_i^{1,gap}$ and  $I_i^{1,prior}$, the above grouping pattern suggests that negative weights may be driving larger TSLS estimates for $I_i^{1,gap}$ and $I_i^{1,prior}$. 

\subsection{Jäger et al. (2024)}
\cite{jager2022worker} study how information about workers' outside options---that is, workers' wages if they left their current job to find a new one---affects their labor market decisions. First, \cite{jager2022worker} elicit workers' prior expectations of their outside options. Next, they split their sample into a control group and a single treatment group. The control arm receives no information; those in the treatment group are told the mean wage of workers similar to themselves (based on gender, age, occupation, labor market region, and education level). Then, \cite{jager2022worker} elicit workers' posterior expectations of their outside options.

Figure \ref{fig:jager-coeff} displays estimated coefficients from all four interactions using intended negotiation probability as the outcome---estimates for other outcomes are in Appendix \ref{main:sec:appendix.applications.1}. The interaction $I_i^{sign}$ coefficient is 30-40\% larger (though not statistically significantly so) than the coefficients from interactions $I_i^{gap}$, $I_i^{1,gap}$, and $I_i^{1,prior}$, each of which are similar to each other. The fact that interaction $I_i^{gap}$, which ensures positive weights, is similar to interactions $I_i^{1,gap}$ and $I_i^{1,prior}$ suggests that the potential negative weighting in the latter is not consequential in this setting. Instead, the smaller coefficients for interactions $I_i^{gap}$, $I_i^{1,gap}$, and $I_i^{1,prior}$ may be due to the up-weighting of agents with larger perception gaps. We provide further evidence that up-weighting may be driving these results in Appendix \ref{main:sec:appendix.applications.2}. 

\begin{figure}[h!]
    \centering
    \includegraphics[width = 0.7\textwidth]{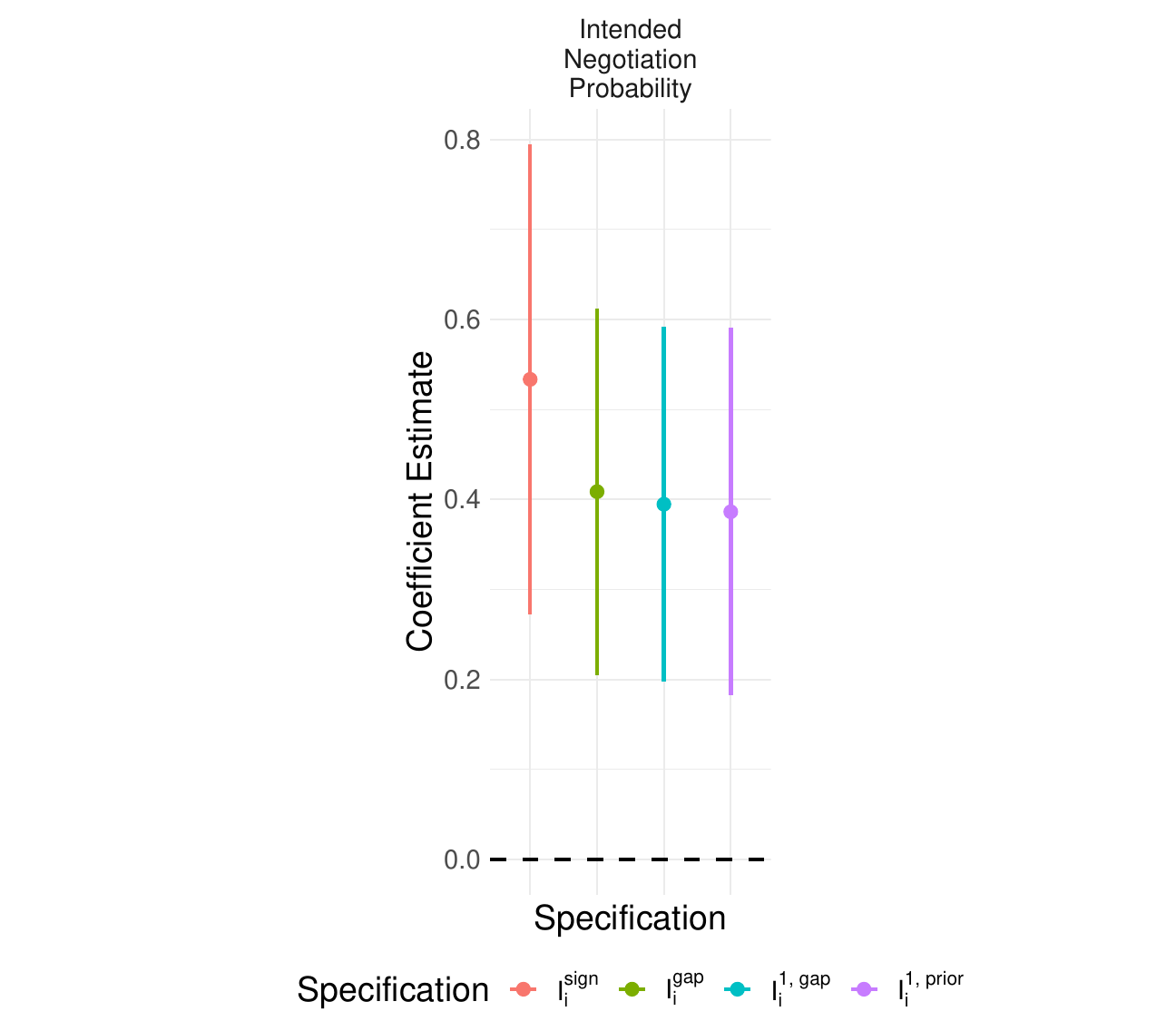}
    \caption{Coefficient Estimates---Data from \cite{jager2022worker}}
    \label{fig:jager-coeff}
    \floatfoot{\textit{Note:} The points are point estimates and the bars are 95\% confidence intervals. The outcome is the intended probability of negotiating for a raise. The signal $S_{i}^{T}$ is the mean wage of workers similar to worker $i$, and the prior $\mu(B_{i0})$ and posterior $\mu(B_{i1})$ are worker $i$'s expectations of how much their wages would change, as a percentage of their current wage, pre- and post- treatment, respectively. $I_i^{sign}$ regresses the outcome on the sign of the perception gap and the posterior, instrumenting the posterior by the treatment indicator times the sign of the perception gap. $I_i^{gap}$ regresses the outcome on the perception gap and the posterior, instrumenting the posterior by the treatment indicator times the perception gap. $I_i^{1, gap}$ regresses the outcome on the perception gap and the posterior, instrumenting the posterior by the treatment indicator and the treatment indicator times the perception gap. Following \cite{jager2022worker}, we normalize agents' perception gaps $S_{i}^{T} - \mu(B_{i0})$ to be a percentage of $S_{i}^{T}$ for $I_i^{gap}$ and $I_i^{1, gap}$. $I_i^{1, prior}$ regresses the outcome on the prior and the posterior, where the posterior is instrumented by the treatment indicator, the treatment indicator times the prior, and the treatment indicator times the signal. We estimate the version of interaction $I^{1,prior}_{i}$ that includes both the signal and the prior because signals are personalized to each agent---see Interaction \ref{main:int:kumar} for a discussion. In all specifications, the coefficient of interest is the coefficient on the posterior expectation. 
} 
\end{figure}

One explanation for why up-weighting may have attenuated the estimates is endogenous information acquisition, such as in \cite{balla2023identifying}. Agents who are more responsive to information are more likely to seek it out, and therefore have more accurate priors. At the same time, agents that are very accurate will not respond much to the information treatment; their actions are already near-optimal. Therefore, the agents that are most responsive are those that have medium-sized perception gaps, and the agents that are least responsive are those that have large or no perception gap. Interactions $I_i^{gap}$, $I_i^{1,gap}$, and $I_i^{1,prior}$ will up-weight agents with large perception gaps---and thus those who are least responsive---attenuating treatment effects towards zero.

\section{Conclusion}
This paper examines identification and estimation in information provision experiments. We developed a causal framework for these experiments, and provided conditions for recovering interpretable APEs in both passive and active control designs. Using this framework, we evaluated TSLS specifications from the literature. We found that the standard active control specification identifies an APE with non-negative weights. However, two of the four common passive control specifications allow for negative weights, which we advise against.

\newpage
\bibliography{lit.bib}
\newpage

\section*{Appendix}
\renewcommand{\thesubsection}{\Alph{subsection}}
\renewcommand{\thelemma}{\Alph{subsection}\arabic{lemma}}

\subsection{Posterior Formation}\label{main:sec:posterior.formation}

Assume that $\Omega \subseteq \R$. In settings where the state space is multi-dimensional, we can apply the arguments below to the relevant marginal distribution of $B$ instead. Let $\nu \in \Delta(\R)$ be a sigma-finite measure.

\subsubsection{Perceptions}
We place structure on how agents perceive the information provision. Formally, agents in treatment group $g$ perceive the signals as realizations from signal distributions $\{Q_{i1}^{g}(\cdot|\omega): \omega \in \Omega\} \subseteq \Delta(\R)$ that are informative for the underlying state $\omega \in \Omega$. These distributions have densities $q_{i1}^{g}(\cdot|\omega) := dQ_{i1}^{g}(\cdot|\omega)/d\nu$ with respect to $\nu$.
\begin{definition}[MLR]\label{main:def:MLRP}
A family of perceived signal distributions $\{Q_{i1}^{g}(\cdot|\omega): \omega \in \Omega\} \subseteq \Delta(\R)$ satisfies the \textit{monotone likelihood ratio} (MLR) property in $s \in \R$ if $\omega' > \omega$ and $s' > s$ implies
\begin{align*}
    \frac{q_{i1}^{g}(s'|\omega')}{q_{i1}^{g}(s'|\omega)} \geq \frac{q_{i1}^{g}(s|\omega')}{q_{i1}^{g}(s|\omega)}.
\end{align*}
\end{definition}
This property (c.f. \citet[Definition 8.3.16]{casella2021statistical}) is satisfied for a number of signal distribution families, including the set of Gaussian beliefs considered in the literature---see Section \ref{main:sec:gaussian.model}. In fact, many exponential families satisfy Definition \ref{main:def:MLRP} in their respective sufficient statistics \citep[Section 4.5]{bernardo2009bayesian}. Intuitively, agents assigned to treatment group $g$ assume that the experiment tends to provide larger signals $s$ under larger realizations of the state $\omega \in \Omega$. 

\subsubsection{Belief Updating}
To link the above MLR property to signal monotonicity, we must consider the belief updating rules $s \mapsto B_{i1}^{g}(\cdot|s)$ in greater detail. We suppose that beliefs $B \in \B$ have densities $b := dB/d\nu$ with respect to $\nu$. In particular, the priors $B_{i0}$ have densities $b_{i0} := dB_{i0}/d\nu$. We first consider Bayesian rules.

\begin{definition}[Bayesian]
$s \mapsto B_{i1}^{g}(\cdot|s)$ is \textit{Bayesian} if
\begin{align*}
    \frac{dB_{i1}^{g}(\omega|s)}{d\nu} = \frac{q_{i1}^{g}(s|\omega)b_{i0}(\omega)}{\displaystyle \int q_{i1}^{g}(s|\omega)b_{i0}(\omega) d\nu(\omega)}.
\end{align*}
\end{definition}
These Bayesian rules are prevalent in models from the literature \citep{armantier2016price, cavallo2017inflation, armona2019home, cullen2022much, fuster2022expectations, balla2022determinants}. However, the MLR framework can accommodate systematic deviations from Bayesian updating. 

\begin{definition}[Anchored]
$s \mapsto B_{i1}^{g}(\cdot|s)$ is \textit{anchored} if there exists $\tau_{i}^{g} \in [0,1]$ and $\Bar{B}_{i}^{g} \in \B$ such that
\begin{align*}
    \frac{dB_{i1}^{g}(\omega|s)}{d\nu} = \tau_{i}^{g}\frac{d\Bar{B}_{i}^{g}(\omega)}{d\nu} + (1 - \tau_{i}^{g})\frac{q_{i1}^{g}(s|\omega)b_{i0}(\omega)}{\displaystyle \int q_{i1}^{g}(s|\omega)b_{i0}(\omega) d\nu(\omega)}.
\end{align*}
\end{definition}
Anchored rules allows agents' updating rules to be distorted away from the Bayesian baseline, and towards some anchor belief $\Bar{B}_{i}^{g} \in \B$. The anchored rules accommodate numerous behavioral biases from the behavioral economics literature \citep[Section 2.3]{gabaix2019behavioral}, including the anchoring-and-adjustment heuristics discussed in \cite{tversky1974judgment}. For instance, choosing the anchors to be the prior beliefs $\Bar{B}_{i}^{g} = B_{i0}$ corresponds to \textit{conservatism}, wherein agents insufficiently update their beliefs relative to the Bayesian baseline \citep{edwards1968conservatism}.

The anchored updating rules are ``affine'' distortions of Bayesian updating \citep{de2022non}. However, signal monotonicity also accomodates ``nonlinear'' distortions. The following class of nonlinear distortions is due to \cite{grether1980bayes}. 

\begin{definition}[Grether]
$s \mapsto B_{i1}^{g}(\cdot|s)$ is \textit{Grether} if there exist $\chi_{i0}^{g}, \chi_{i1}^{g} > 0$ such that
\begin{align*}
    \frac{dB_{i1}^{g}(\omega|s)}{d\nu} = \frac{q_{i1}^{g}(s|\omega)^{\chi_{i1}^{g}}b_{i0}(\omega)^{\chi_{i0}^{g}}}{\displaystyle \int q_{i1}^{g}(s|\omega)^{\chi_{i1}^{g}}b_{i0}(\omega)^{\chi_{i0}^{g}} d\nu(\omega)}.
\end{align*}
\end{definition}

The inference and base-rate parameters $\chi_{i0}^{g}, \chi_{i1}^{g} > 0$ allow for flexible deviations from Bayesian updating---the discussion below closely follows \citet[page 103]{benjamin2019errors}. On the one hand, $\chi_{i1}^{g} < 1$ ($\chi_{i1}^{g}>1$) reflects under-inference (over-inference) in the sense that agents in treatment group $g$ update as if the signals $s$ are less (more) information for the states $\omega \in \Omega$ than in the Bayesian baseline of $\chi_{i1}^{g} = 1$. On the other hand, $\chi_{i0}^{g} < 1$ ($\chi_{i0}^{g} > 1$) reflects base-rate neglect (over-use) in the sense that agents in treatment group $g$ update as if their priors $B_{i0}$ are less (more) information for the states $\omega \in \Omega$ than in the Bayesian baseline of $\chi_{i0}^{g} = 1$. 

\begin{remark}
Allowing for non-Bayesian updating has practical relevance, since the information provision literature often acknowledges the potential for behavioral biases in belief formation. For example, there is concern that agents may numerically anchor their posterior expectations at the values of quantitative signals \citep{cavallo2017inflation}. This behavior corresponds to anchored updating with $\Bar{B}_{i}^{g}$ as the distribution that places unit mass on the provided signal. There is also concern that information provision induces emotional responses \citep{haaland2023designing}. If such responses are suggestive of motivated reasoning, then we can microfound the anchoring parameters $\tau_{i}^{g} \in [0,1]$ with a model of utility-maximization \citep[Example 2]{de2022non}; similar arguments apply for the inference and base-rate parameters $\chi_{i0}^{g}, \chi_{i1}^{g}$ in the Grether case.
\end{remark}

\subsubsection{Signal Monotonicity via MLR}
The following proposition shows that signal monotonicity is satisfied in the MLR framework.
\begin{proposition}\label{main:prop:MLRP}
For each treatment group $g$, suppose that $\{Q_{i1}^{g}(\cdot|\omega): \omega \in \Omega\} \subseteq \Delta(\R)$ satisfies the MLR property in $s \in \R$ and that $s \mapsto B_{i1}^{g}(\cdot|s)$ is either Bayesian, anchored, or Grether. Then Assumption \ref{main:ass:signal.monotonicity} is satisfied for any $B \mapsto \phi(B) := \int \varphi(\omega) dB(\omega)$ such that $\omega \mapsto \varphi(\omega)$ is increasing.
\end{proposition}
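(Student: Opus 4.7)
The plan is to route Assumption \ref{main:ass:signal.monotonicity} through first-order stochastic dominance (FOSD) of the posteriors in $s$, since $\phi(B) = \int \varphi(\omega) dB(\omega)$ with $\varphi$ increasing is exactly the class of functionals monotone under FOSD. So I would reduce the claim to showing: for each of the three updating rules, $s' > s$ implies that $B_{i1}^{g}(\cdot|s')$ first-order stochastically dominates $B_{i1}^{g}(\cdot|s)$.

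First I would prove an intermediate claim: the family of posteriors $\{B_{i1}^{g}(\cdot|s) : s \in \R\}$ itself satisfies the MLR property in $\omega$. In the Bayesian case, the posterior density ratio is
\begin{align*}
\frac{dB_{i1}^{g}(\omega|s')/d\nu}{dB_{i1}^{g}(\omega|s)/d\nu} \;\propto\; \frac{q_{i1}^{g}(s'|\omega)}{q_{i1}^{g}(s|\omega)},
\end{align*}
and by rearranging Definition \ref{main:def:MLRP} this ratio is nondecreasing in $\omega$ whenever $s' > s$. In the Grether case the same ratio becomes $(q_{i1}^{g}(s'|\omega)/q_{i1}^{g}(s|\omega))^{\chi_{i1}^{g}}$, which, since $\chi_{i1}^{g} > 0$, is a monotone transformation of the Bayesian ratio and hence still nondecreasing in $\omega$. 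A standard fact (e.g., Casella and Berger) then gives FOSD of the posteriors in $s$.

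For the anchored case, I would write the posterior as the mixture $\tau_{i}^{g} \bar{B}_{i}^{g} + (1-\tau_{i}^{g}) B^{\text{Bayes}}_{i1}(\cdot|s)$, where only the second component depends on $s$. Since FOSD is preserved under convex combinations with a common component (pointwise comparison of CDFs), the Bayesian FOSD result immediately lifts: $s' > s$ gives $\tau_{i}^{g}\bar{B}_{i}^{g} + (1-\tau_{i}^{g})B^{\text{Bayes}}_{i1}(\cdot|s')$ FOSD $\tau_{i}^{g}\bar{B}_{i}^{g} + (1-\tau_{i}^{g})B^{\text{Bayes}}_{i1}(\cdot|s)$.

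Finally, I would conclude by integrating the increasing function $\varphi$ against the FOSD-ordered posteriors, which yields $\phi(B_{i1}^{g}(\cdot|s')) \geq \phi(B_{i1}^{g}(\cdot|s))$, exactly Assumption \ref{main:ass:signal.monotonicity}. The main (minor) obstacle is purely technical: verifying that the MLR-to-FOSD implication extends to posteriors having unbounded support and possibly improper densities in the tails, which can be handled by the standard likelihood-ratio argument comparing $\{B_{i1}^{g}(\omega|s') \geq t\}$ to $\{B_{i1}^{g}(\omega|s) \geq t\}$ via a single-crossing property of the densities.
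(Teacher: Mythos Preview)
Your proposal is correct and follows essentially the same route as the paper: establish that the MLR property of the signal distributions induces MLR (and hence FOSD) ordering of the posteriors in $s$, then integrate the increasing $\varphi$ to obtain signal monotonicity. The only cosmetic difference is in the anchored case, where the paper exploits linearity of $\phi$ directly to write $\phi(B_{i1}^{g}(\cdot|s)) = \tau_{i}^{g}\phi(\bar{B}_{i}^{g}) + (1-\tau_{i}^{g})\phi(B^{\text{Bayes}}_{i1}(\cdot|s))$ rather than passing through FOSD of mixtures; both arguments are equivalent here.
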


Proposition \ref{main:prop:MLRP} shows that signal monotonicity is satisfied under general conditions on agents' posterior formation. We allow for rich heterogeneity across agents: some can have Bayesian updating rules, while others have anchored or Grether rules with agent-specific distortion parameters. We also do not restrict agents' prior beliefs, opting instead to impose the structure of the MLR property on the response of beliefs to information provided.

\begin{remark}
Proposition \ref{main:prop:MLRP} covers expectations and second moments, which correspond to $\varphi(\omega) = \omega$ and $\varphi(\omega) = \omega^{2}$, respectively. However, it does not ensure signal monotonicity for the variance, which is the difference of second moments and the squared-expectations. That said, it is important to note that the MLR framework only gives a set of \textit{sufficient} conditions for Assumption \ref{main:ass:signal.monotonicity}. In particular, signal monotonicity can still hold for the variance, as discussed in Section \ref{main:sec:signal.monotonicity}.
\end{remark}

\subsubsection{Gaussian Model}\label{main:sec:gaussian.model}
Our framework nests models of Bayesian learning with Gaussian beliefs from the literature. Relative to this model, we allow for richer patterns of heterogeneity in agents' prior beliefs, perceptions of information provision, and belief updating rules.

In the Gaussian model, agents are Bayesian with Gaussian prior beliefs. Agent $i$ assumes that, if the true state is $\omega \in \Omega$, then $S_{i}^{g}$ for each treatment group $g$ is a realization from a Gaussian distribution with mean $\omega$ and variance $\varsigma_{i}^{2}$. Proposition \ref{main:prop:MLRP} applies---in this case, $B_{i1}^{g}(\cdot|s) = B_{i1}(\cdot|s) $ is Gaussian for each $s \in \R$, with mean and variance given by
\begin{align}\label{main:eq:linear.gaussian}
\begin{split}
    \mu(B_{i}(\cdot|s)) &= r_{i}s + (1 - r_{i})\mu(B_{i0}), \\
    \sigma^{2}(B_{i}(\cdot|s)) &= (1 - r_{i})\sigma^{2}(B_{i0}),
\end{split}
&  r_{i} &= \frac{\sigma^{2}(B_{i0})}{\sigma^{2}(B_{i0}) + \varsigma_{i}^{2}}.
\end{align}
In particular, the posterior expectation $\mu(B_{i}^{g})$ is a weighted average of the treatment signal $S_{i}^{g} \in \R$ and the prior mean $\mu(B_{i0})$, with weights $r_{i} \in (0,1)$ and $1 - r_{i}$ that are proportional to agents' certainty $1/\sigma^{2}(B_{i0})$ in their prior beliefs, and the perceived precision $1/\varsigma_{i}^{2}$ of the signal distribution---for a derivation, see \citet[Section 5]{hoff2009first}. In the information provision literature, $r_{i}$ is known as the \textit{learning rate} for agent $i$. These learning rates are higher the more the signals are perceived as informative for the underlying state. This model automatically imposes treatment-group neutrality. If we also assume $\mu(B_{i1}^{C}) = \mu(B_{i0})$ to ensure control-group stability, then we obtain
\begin{align*}
    |\mu(B_{i1}^{T}) - \mu(B_{i1}^{C})| &=  r_{i}|S_{i}^{T} - \mu(B_{i0})| \\
    |\mu(B_{i1}^{H}) - \mu(B_{i1}^{L})| &=  r_{i}|S_{i}^{H} - S_{i}^{L}|.
\end{align*}
Notice that the both of these quantities depend on priors through $r_{i}$, but the passive control comparison further depends on priors via the perception gap.

\subsection{Alternative Formulations}\label{main:sec:alt.formulations}
In the main text, we formulated the causal parameters of interest in terms of partial effects: $\partial_{v}Y_{i}^{\phi}(v)$. In this section, we consider alternative formulations for behavioral elasticities and discrete actions. To see how we can proceed, recall that under Assumption \ref{main:ass:IV}, together with rank and relevance conditions, Lemma \ref{main:lemma:fully.general} implies that the considered TSLS specifications recover estimands of the form
\begin{align*}
    \gamma = \frac{\E[I_{i}'\pi(Y_{i}(B_{i1}^{\tilde{g}})- Y_{i}(B_{i1}^{g}))]}{\E[I_{i}'\pi(\phi(B_{i1}^{\tilde{g}})- \phi(B_{i1}^{g}))]}.
\end{align*}
Assumption \ref{main:ass:parametric} allowed us to appeal to Lemma \ref{main:lemma:FTC} à la
\begin{align*}
    Y_{i}(B_{i1}^{\tilde{g}})- Y_{i}(B_{i1}^{g}) &= Y_{i}^{\phi}(\phi(B_{i1}^{\tilde{g}}))- Y_{i}^{\phi}(\phi(B_{i1}^{g}))\\
    &= \psi_{i}^{g\Tilde{g}}|\phi(B_{i1}^{\tilde{g}}) - \phi(B_{i1}^{g})|\int \lambda_{i}^{g\Tilde{g}}(v)\partial_{v}Y_{i}^{\phi}(v)dv,
\end{align*}
where $v \mapsto \lambda_{i}^{g\Tilde{g}}(v)$ is the density function for the uniform distribution over the feature values between $\phi(B_{i1}^{g})$ and $\phi(B_{i1}^{\tilde{g}})$. We can modify this argument to handle behavioral elasticities and discrete actions. In what follows, let $\V_{i}^{g\Tilde{g}}$ denote the set of feature values between $\phi(B_{i1}^{g})$ and $\phi(B_{i1}^{\tilde{g}})$.

\subsubsection{Behavioral Elasticities}\label{main:sec:elasticities}
If $Y_{i}(B) \neq 0$ and $\phi(B) \neq 0$, then we can formulate the causal parameters of interest in terms of partial elasticities---these \textit{behavioral elasticities} generate unit-free measures, which facilitates comparisons across applications \citep[Section 8]{haaland2023designing}. We therefore consider TSLS specifications that replace $Y_{i}$ and $\phi(B_{i1})$ with $\log(Y_{i}^{n})$ and $\log(\phi(B_{i1})^{n})$, for some fixed $n \in \mathbb{N}$ such that $Y_{i}(B)^{n} > 0$ and $\phi(B)^{n} > 0$. Under Assumption \ref{main:ass:parametric}, Lemma \ref{main:lemma:FTC} implies
\begin{align*}
    \log(Y_{i}(B_{i1}^{\tilde{g}})^{n}) - \log(Y_{i}(B_{i1}^{g})^{n}) &= \log(Y_{i}^{\phi}(\phi(B_{i1}^{\tilde{g}}))^{n}) - \log(Y_{i}^{\phi}(\phi(B_{i1}^{g}))^{n})\\
    &= n\psi_{i}^{g\Tilde{g}}\int v\frac{\partial_{v}Y_{i}^{\phi}(v)}{Y_{i}^{\phi}(v)} v^{-1}\1\{v \in \V_{i}^{g\Tilde{g}}\}dv,
\end{align*}
and
\begin{align*}
    \log(\phi(B_{i1}^{\tilde{g}})^{n}) - \log(\phi(B_{i1}^{g})^{n}) &= n\psi_{i}^{g\Tilde{g}} \int v^{-1}\1\{v \in \V_{i}^{g\Tilde{g}}\} dv.
\end{align*}
Therefore, we obtain
\begin{align*}
    \gamma = \E\left[\displaystyle \int \Lambda_{i}^{g\tilde{g}}(v)\, v\frac{\partial_{v}Y_{i}^{\phi}(v)}{Y_{i}^{\phi}(v)} dv\right], \quad \Lambda_{i}^{g\tilde{g}}(v) &= \frac{\psi_{i}^{g\tilde{g}}I_{i}'\pi \, v^{-1}\1\{v \in \V_{i}^{g\tilde{g}}\}}{\E\left[\displaystyle \int \psi_{i}^{g\tilde{g}}I_{i}'\pi \, v^{-1}\1\{v \in \V_{i}^{g\tilde{g}}\} dv \right]}.
\end{align*}

The above is a weighted average of partial elasticities $v\partial_{v}Y_{i}^{\phi}(v)/Y_{i}^{\phi}(v)$ with weights $\Lambda_{i}^{g\tilde{g}}(v)$ that integrate to one across $v$ and $i$. Using the log-version of Specification \ref{main:spec:active} corresponds to $I_{i} = 1$, and so given treatment-group neutrality, we obtain $\psi_{i}^{LH} = 1$ such that
\begin{align*}
    \Lambda_{i}^{LH}(v) = \frac{v^{-1}\1\{v \in \V_{i}^{LH}\}}{\E\left[\displaystyle \int v^{-1}\1\{v \in \V_{i}^{LH}\} dv \right]} \geq 0.
\end{align*}
In similar fashion, we can consider the log-version of Specification \ref{main:spec:passive} with Interaction \ref{main:int:IPIV}. Given control-group stability and $S_{i}^{T} \neq \phi(B_{i0})$, we obtain $\psi_{i}^{CT}I_{i}^{sign}\1\{v \in \V_{i}^{CT}\} = \1\{v \in \V_{i}^{CT}\}$ such that
\begin{align*}
    \Lambda_{i}^{CT}(v) = \frac{v^{-1}\1\{v \in \V_{i}^{CT}\}}{\E\left[\displaystyle \int v^{-1}\1\{v \in \V_{i}^{CT}\} dv \right]} \geq 0.
\end{align*}

\subsubsection{Discrete Actions}\label{main:sec:discrete.actions}
To consider the case of discrete actions, we modify Assumption \ref{main:ass:parametric} as follows. We assume that $\{\phi(B): B \in \B\}$ is a convex set, and that there exists a continuously differentiable $v \mapsto \Bar{Y}_{i}(v)$ defined over it such that $B \mapsto \E[Y_{i}(B)|R_{i}] = \Bar{Y}_{i}^{\phi}(\phi(B))$, where $R_{i}$ is some random vector. If $I_{i}$ is a function or subset of $R_{i}$, then
\begin{align*}
    \E[I_{i}'\pi(Y_{i}(B_{i1}^{\tilde{g}})- Y_{i}(B_{i1}^{g}))] &= \E[I_{i}'\pi [\Bar{Y}_{i}^{\phi}(\phi(B_{i1}^{\tilde{g}})) - \Bar{Y}_{i}^{\phi}(\phi(B_{i1}^{g}))]] \\
    &= \E\left[\psi_{i}^{g\Tilde{g}}I_{i}'\pi|\phi(B_{i1}^{\tilde{g}}) - \phi(B_{i1}^{g})|\int \lambda_{i}^{g\Tilde{g}}(v)\partial_{v}\Bar{Y}_{i}^{\phi}(v)dv\right].
\end{align*}
Therefore, we arrive at similar expressions for $\gamma$ as in Specifications \ref{main:spec:passive} and \ref{main:spec:active}, except now the partial effects of features on actions $\partial_{v}Y_{i}^{\phi}(v)$ are replaced by the partial effects of features on average actions $\Bar{Y}_{i}^{\phi}(v)$ conditional on $R_{i}$. The latter could entirely consist of the observed ``pre-determined'' variables in the experiment (i.e., prior features, signals, and agent characteristics). However, we also allow for latent unobserved components. The following example should fix ideas.

Let $\Omega = \R$ and consider binary actions $B \mapsto Y_{i}(B) \in \{0,1\}$ that satisfy
\begin{align*}
    Y_{i}(B) = \argmax_{y \in \{0,1\}} \int y \omega + (1 - y) \xi_{i} \, dB(\omega), \quad B \in \B.
\end{align*}
In particular, agents maximize their subjective expected utility---we can think of $\xi_{i}$ as an outside option, and $\omega$ as the benefit to choosing $y=1$. The above leads to
\begin{align*}
    Y_{i}(B) = \1\{\mu(B) \geq \xi_{i}\}.
\end{align*}
If $\xi_{i}$ has a continuous distribution conditional on $R_{i}$, then we can use $\Bar{Y}_{i}^{\phi}(v) = \P(\xi_{i} \leq v|R_{i})$.

\renewcommand{\theassumption}{\Alph{subsection}\arabic{assumption}}
\setcounter{assumption}{0}

\subsection{General Identification}\label{main:sec:general.identification}
There are groups $g \in \G$, a set of elicited features $\Phi$, and signals $S_{i}^{g} \in \S^{g}$. This setup is general enough to cover a broad range of signal designs from the information provision literature. For instance, $\S^{g} \subseteq \R$ corresponds to \textit{quantitative} signals, which includes government/institutional statistics \citep{cullen2022much, roth2022risk}, expert forecasts \citep{roth2020expectations, kumar2023effect}, and machine learning predictions \citep{jager2022worker, deshpande2023lack}. On the other hand, if we take $\S^{g}$ to be an abstract space, then we can accommodate \textit{qualitative} signals, such as informational videos \citep{alesina2021perceptions, dechezlepretre2022fighting}. 

\subsubsection{Action Function Primitives}\label{main:sec:primitive.conditions}

Define $B \mapsto \eta_{k}(B) = (\phi_{k'}(B))_{k'\neq k}$. In a slight abuse of notation, we let $\eta_{0}(B) := (\phi_{k}(B))_{k}$ so that $\eta_{0}(\B) := \{\eta_{0}(B): B \in \B\} = \{(\phi_{1}(B), \ldots, \phi_{K}(B)): B \in \B\}$. 

\begin{assumption}[Feature Action Functions]\label{main:ass:general.parametric}
$\eta_{0}(B)$ is convex, and there exists a continuously differentiable $h \mapsto Y_{i}^{*}(h)$ defined over it such that $B \mapsto Y_{i}(B) = Y_{i}^{*}(\eta_{0}(B))$.
\end{assumption}

To give primitive conditions for this assumption, we consider a (net) utility/profit function $u_{i}(\omega, y)$ that depends on states $\omega$ and actions $y \in \Y$. We can motivate the action function $B \mapsto Y_{i}(B)$ as agent $i$'s utility-maximizing map from beliefs to actions:
\begin{align}\label{main:eq:optimization.general}
    Y_{i}(B) = \argmax_{y \in \Y} \int u_{i}(\omega, y)  dB(\omega), \quad B \in \B. 
\end{align}
If agent $i$ has beliefs $B$, then $\int u_{i}(\omega, y) dB(\omega)$ is their subjective expected utility from taking action $y$. In particular, $Y_{i} \equiv Y_{i}(B_{i1})$ is the utility-maximizing action taken under $i$'s posterior beliefs $B_{i1}$. We can use representation (\ref{main:eq:optimization.general}) to construct feature action functions $Y_{i}^{*}(h)$. 

\paragraph{Restrictions on Preferences.} Let $m_{i}(\Y) := \{m_{i}(y): y \in \Y\}$ denote the image of $\Y$ under function $m_{i}: \Y \to \R$.

\begin{proposition}\label{main:prop:random.coefficient}
Consider $\phi_{k}(B) := \int \varphi_{k}(\omega) dB(\omega)$, where $\varphi_{k}: \Omega \to \R$. Let $\B$ be the subset of beliefs such that $\int |\varphi_{k}(\omega)|^{2}dB(\omega) < \infty$ for each $k$. If $B \mapsto Y_{i}(B)$ satisfies representation (\ref{main:eq:optimization.general}) with $u_{i}(\omega, y) = -(\sum_{k=1}^{K}\theta_{k i}\varphi_{k}(\omega) - m_{i}(y))^{2}$, where (i) $y \mapsto m_{i}(y)$ is strictly monotonic and continuously differentiable; and (ii) $\sum_{k = 1}^{K}\theta_{k i}\phi_{k}(B) \in m_{i}(\Y)$ for each $B \in \B$, then $Y_{i}(B) = m_{i}^{-1}(\sum_{k = 1}^{K}\theta_{k i}\phi_{k}(B))$. In particular, Assumption \ref{main:ass:general.parametric} is satisfied for $Y_{i}^{*}(h) := m_{i}^{-1}(\sum_{k = 1}^{K}\theta_{k i}h_{k})$.
\end{proposition}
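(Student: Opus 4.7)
The plan is to reduce the infinite-dimensional optimization over beliefs to a one-dimensional optimization in $m_i(y)$, using the fact that the quadratic utility's dependence on $B$ enters only through the linear functionals $\phi_k(B)$.

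First I would expand the square inside the integral in (\ref{main:eq:optimization.general}):
\begin{align*}
\int u_i(\omega,y)\,dB(\omega) &= -\int\Bigl(\textstyle\sum_{k}\theta_{ki}\varphi_k(\omega)\Bigr)^{2} dB(\omega) + 2 m_i(y)\sum_{k}\theta_{ki}\phi_k(B) - m_i(y)^{2}.
\end{align*}
The first term is finite by the assumption $\int|\varphi_k|^{2}dB < \infty$ and the Cauchy--Schwarz inequality applied to the cross terms, and crucially it does not depend on $y$, so it can be dropped from the $\argmax$. What remains is to maximize $t \mapsto 2 t\sum_k\theta_{ki}\phi_k(B) - t^{2}$ over $t \in m_i(\Y)$, where I use the substitution $t := m_i(y)$; strict monotonicity and continuity of $m_i$ make $y \mapsto t$ a bijection between $\Y$ and $m_i(\Y)$, so the $\argmax$ in $y$ is well-defined iff the $\argmax$ in $t$ is.

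The unconstrained maximizer of the concave quadratic $2t\,c - t^{2}$ is $t^{*} = c := \sum_k\theta_{ki}\phi_k(B)$. By hypothesis (ii), $t^{*} \in m_i(\Y)$, so the constrained and unconstrained maxima coincide and the maximizer is unique. Inverting gives $Y_i(B) = m_i^{-1}(t^{*}) = m_i^{-1}(\sum_k\theta_{ki}\phi_k(B))$, which yields the functional form $Y_i^{*}(h) := m_i^{-1}(\sum_k\theta_{ki}h_k)$ on $\eta_0(\B)$.

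It remains to verify the hypotheses of Assumption \ref{main:ass:general.parametric}. Convexity of $\eta_0(\B)$ follows because mixtures of beliefs stay in $\B$ (the second-moment integrability is preserved by convex combinations) and $B \mapsto \phi_k(B)$ is linear, so $\eta_0(\alpha B_1 + (1-\alpha)B_2) = \alpha\eta_0(B_1) + (1-\alpha)\eta_0(B_2)$. Continuous differentiability of $Y_i^{*}$ follows from the chain rule: $h \mapsto \sum_k\theta_{ki}h_k$ is linear and smooth, and $m_i^{-1}$ is $C^{1}$ on $m_i(\Y)$ by the inverse function theorem, which applies because strict monotonicity together with continuous differentiability of $m_i$ ensures $m_i'$ does not vanish (the standard reading of the hypothesis; if $m_i'$ vanished on a set of positive measure one could not have strict monotonicity combined with continuous differentiability in a meaningful neighborhood). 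The only real subtlety to watch for is ensuring that the constraint $t^{*}\in m_i(\Y)$ in hypothesis (ii) is what converts a local/unconstrained first-order argument into a global conclusion about the $\argmax$ over $\Y$; beyond that, the proof is essentially a completion-of-the-square exercise.
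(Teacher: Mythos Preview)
Your proposal is correct and follows essentially the same route as the paper: substitute $a=m_i(y)$, solve the resulting quadratic in $a$ to obtain $a_i(B)=\sum_k\theta_{ki}\phi_k(B)\in m_i(\Y)$ by hypothesis (ii), invert $m_i$, then verify convexity of $\eta_0(\B)$ via mixtures and continuous differentiability of $Y_i^{*}$ via the inverse function theorem. One small caveat on your parenthetical: strict monotonicity together with $C^{1}$ does \emph{not} force $m_i'\neq 0$ pointwise (take $m_i(y)=y^{3}$), so the nonvanishing derivative is really an implicit extra reading of the hypothesis---but the paper glosses over exactly the same point.
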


Proposition \ref{main:prop:random.coefficient} specifies a class of squared-error loss functions $-u_{i}(\omega, y)$. To give intuition, we consider the example of \cite{jager2022worker}. Let $m_{i}(y) = y$, and suppose there is a latent function $\omega \mapsto f_{i}(\omega)$ that maps potential outside option wages $\omega$ to the optimal likelihood that worker $i$ should look for a new job. However, this function is complicated or imperfectly known, and so $i$ uses the approximation $\omega \mapsto \sum_{k=1}^{K}\theta_{k i}\varphi_{k}(\omega)$, where $\varphi_{k}(\omega) = \omega^{k-1}$. In this case, $Y_{i}(B) = \sum_{k=1}^{K}\theta_{ki}\phi_{k}(B)$ gives worker $i$'s minimum expected squared-error prediction for the approximate optimal likelihood that they should look for a new job. Notice that $K=2$ gives a linear approximation, whereas $K=3$ gives a quadratic approximation. In the former case, we obtain $Y_{i}(B) = \theta_{1i} + \theta_{2i}\mu(B)$, and $\theta_{2i}$ gives the partial effects of expectations on actions---this specification is considered in \cite{balla2023identifying}. On the other hand, $K=3$ allows agents to be risk averse, which is relevant for some applications \citep{kumar2023effect, coibion2021effect}.

When $m_{i}(y) = y$, the partial effects of feature $\phi_{k}$ above are homogeneous across $\eta_{0}(B)$:
\begin{align*}
    v \mapsto \partial_{v} Y_{i}^{*}(v, h_{-k}) = \theta_{k i}.
\end{align*}
However, this homogeneity may not hold in some models \citep{cullen2022much}. To generate nonlinear feature action functions, we can consider nonlinear $m_{i}$:
\begin{align*}
    v \mapsto \partial_{v} Y_{i}^{*}(v, h_{-k}) = (\partial_{y}m_{i}(Y_{i}^{*}(v, h_{-k})))^{-1}\theta_{ki}.
\end{align*}

\paragraph{Restrictions on Beliefs.} In some cases, $\B$ is a parametric family of beliefs. A leading class of parametric $\B$ are exponential families \citep[Section 3.4]{lehmann2006theory}. This class includes the Gaussian families often considered in models from the information provision literature \citep{armantier2016price, cavallo2017inflation, armona2019home, cullen2022much, fuster2022expectations, balla2022determinants}. There are also classes of parametric beliefs that are useful for belief elicitation; for example, \cite{kumar2023effect} and \cite{coibion2021effect} consider triangular distributions.

If $\B$ is a parametric family of beliefs characterized by $(\phi_{k})_{k=1}^{K}$, then $\eta_{0}$ is one-to-one over $\eta_{0}(\B)$ so that $Y_{i}^{*}(h) := Y_{i}(\eta_{0}^{-1}(h))$. Consider a sigma-finite measure $\nu \in \Delta(\Omega)$.

\begin{proposition}\label{main:prop:parametric.optimization}
Consider a set of beliefs $\B$ dominated by $\nu$, with densities $b(\omega) := dB(\omega)/d\nu$ parameterized by $\eta_{0}$ in the sense that there exists $\omega \mapsto b^{*}(\omega|h)$ with open and convex parameter space $\eta_{0}(\B)$ such that $b(\omega) = b^{*}(\omega|\eta_{0}(B))$ for each $B \in \B$. If $h \mapsto b^{*}(\omega|h)$ is continuously differentiable over $\eta_{0}(\B)$ almost surely-$\nu$ and $B \mapsto Y_{i}(B)$ satisfies representation (\ref{main:eq:optimization.general}) with $y \mapsto u_{i}(\omega, y)$ twice continuously differentiable and strictly concave, then under regularity conditions, Assumption \ref{main:ass:general.parametric} is satisfied for $Y_{i}^{*}(h) := Y(\eta_{0}^{-1}(h))$.
\end{proposition}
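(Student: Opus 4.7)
}
Convexity of $\eta_0(\B)$ is given by hypothesis, so the task is to construct the function $Y_i^*$ and verify it is continuously differentiable. Since $\B$ is parameterized by $\eta_0$ (i.e., $B$ is pinned down by its feature vector $\eta_0(B)$), the map $\eta_0 : \B \to \eta_0(\B)$ is a bijection, so $Y_i^*(h) := Y_i(\eta_0^{-1}(h))$ is well-defined and automatically satisfies $Y_i(B) = Y_i^*(\eta_0(B))$. By representation \eqref{main:eq:optimization.general},
\begin{equation*}
    Y_i^*(h) \;=\; \argmax_{y \in \Y}\, V_i(y,h), \qquad V_i(y,h) \;:=\; \int u_i(\omega, y)\, b^*(\omega \mid h)\, d\nu(\omega).
\end{equation*}
Strict concavity of $y \mapsto u_i(\omega, y)$ is preserved under integration against the nonnegative density $b^*(\cdot \mid h)$, so $y \mapsto V_i(y,h)$ is strictly concave and the maximizer $Y_i^*(h)$ is unique; assuming an interior optimum (part of the ``regularity conditions''), it is characterized by the first-order condition
\begin{equation*}
    F(y,h) \;:=\; \partial_y V_i(y,h) \;=\; \int \partial_y u_i(\omega, y)\, b^*(\omega \mid h)\, d\nu(\omega) \;=\; 0.
\end{equation*}

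The main step is then the implicit function theorem applied to $F(y,h) = 0$ at $(Y_i^*(h), h)$. For this I would verify, under suitable integrable dominating functions, that one may interchange integration and differentiation in both arguments, so that $F$ is jointly $C^1$ with
\begin{equation*}
    \partial_y F(y,h) = \int \partial_y^2 u_i(\omega, y)\, b^*(\omega \mid h)\, d\nu(\omega), \qquad \partial_h F(y,h) = \int \partial_y u_i(\omega, y)\, \partial_h b^*(\omega \mid h)\, d\nu(\omega).
\end{equation*}
Strict concavity of $u_i$ in $y$ forces $\partial_y F(y,h) < 0$, so the Jacobian in $y$ is invertible. The implicit function theorem then yields that $h \mapsto Y_i^*(h)$ is continuously differentiable on $\eta_0(\B)$, with
\begin{equation*}
    \partial_h Y_i^*(h) \;=\; -\bigl[\partial_y F(Y_i^*(h), h)\bigr]^{-1}\, \partial_h F(Y_i^*(h), h).
\end{equation*}
This establishes Assumption \ref{main:ass:general.parametric} for $Y_i^*$.

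The main obstacle is really the invocation of ``regularity conditions'': one needs enough integrability (dominating functions for $\partial_y u_i \cdot b^*$, $\partial_y^2 u_i \cdot b^*$, and $\partial_y u_i \cdot \partial_h b^*$ uniformly on compact subsets of $\Y \times \eta_0(\B)$) to legitimately pass derivatives through the integral via dominated convergence, plus an interiority/coercivity condition guaranteeing that the argmax is attained in the interior of $\Y$ rather than on the boundary. Once those standard measure-theoretic regularity hypotheses are in place, the proof is a routine application of the implicit function theorem combined with the bijection supplied by the parameterization.
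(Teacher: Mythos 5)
Your proposal is correct and follows essentially the same route as the paper's proof: define the subjective expected utility $U_i(y,h)=\int u_i(\omega,y)b^*(\omega|h)\,d\nu(\omega)$, use strict concavity to get $\partial_y^2 U_i<0$ and the first-order condition at the optimum, and apply the implicit function theorem after justifying differentiation under the integral via dominating functions (which is exactly what the paper's unstated ``regularity conditions'' amount to). Your explicit flagging of interiority of the argmax is a slightly more careful statement of a point the paper leaves implicit, but the argument is the same.
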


In general, Proposition \ref{main:prop:parametric.optimization} generates partial effects that are heterogeneous across $\eta_{0}(\B)$:
\begin{align*}
    v \mapsto \partial_{v} Y_{i}^{*}(v, h_{-k}) = \frac{\partial_{v}\partial_{y}U_{i}(Y_{i}^{*}(v, h_{-k}), v, h_{-k})}{|\partial_{y}^{2}U_{i}(Y_{i}^{*}(v, h_{-k}), v, h_{-k})|},
\end{align*}
where $U_{i}(y, h) := \int u_{i}(\omega, y) b^{*}(\omega|h) d\omega$. For intuition, consider agent $i$'s optimal behavior at feature value $h \in \eta_{0}(\B)$---or equivalently, at belief $\eta_{0}^{-1}(h) \in \B$. At these feature values, the optimal action is $Y_{i}^{*}(h)$. The marginal change in beliefs along feature $\phi_{k}$ leads to a marginal change $\partial_{v}U_{i}(Y_{i}^{*}(v, h_{-k}), v, h_{-k})$ in the agent's subjective expected utility function, so $Y_{i}^{*}(h)$ is no longer optimal. To re-optimize, the agent can either increase or decrease the intensity of their action. This choice of \textit{direction} depends on whether the associated marginal expected utility $\partial_{v}\partial_{y}U_{i}(Y_{i}^{*}(v, h_{-k}), v, h_{-k})$ is positive or negative at $Y_{i}^{*}(v, h_{-k})$. The \textit{magnitude} of the change in actions depends on the curvature $|\partial_{y}^{2}U_{i}(Y_{i}^{*}(v, h_{-k}), v, h_{-k})|$ of the expected utility function at $Y_{i}^{*}(v, h_{-k})$.

\subsubsection{Identifying Variation}
Given $\phi_{k}$ and a pair of groups $g, \tilde{g} \in \G$, we consider parameters of the form
\begin{align*}
    \beta_{k}^{g\Tilde{g}} := \E[w_{ki}^{g\Tilde{g}}\Bar{\beta}_{ki}^{g\Tilde{g}}], \quad \quad w_{ki}^{g\Tilde{g}} &:= \frac{|\phi_{k}(B_{i1}^{\Tilde{g}}) - \phi_{k}(B_{i1}^{g})|}{\E[|\phi_{k}(B_{i1}^{\Tilde{g}}) - \phi_{k}(B_{i1}^{g})|]}, \\[10pt]
     \Bar{\beta}_{ki}^{g\Tilde{g}} &:= \int \lambda_{ki}^{g\Tilde{g}}(v)\partial_{v}Y_{i}^{*}(v, \eta_{k}(B_{i1}^{g}))dv,
\end{align*}
where $v \mapsto \lambda_{ki}^{g\Tilde{g}}(v)$ is the density function of the uniform distribution over the feature values between $\phi_{k}(B_{i1}^{g})$ and $\phi_{k}(B_{i1}^{\Tilde{g}})$. Let $\psi_{ki}^{g\Tilde{g}} := \text{sign}(\phi_{k}(B_{i1}^{\Tilde{g}}) - \phi_{k}(B_{i1}^{g}))$. To recover $\beta_{k}^{g\Tilde{g}}$, we make the following assumption.

\begin{assumption}[Identifying Variation]\label{main:ass:general.variation}
For each $\phi_{k} \in \Phi \cap \{\phi_{k}\}_{k}$, there exist $g, \tilde{g} \in \G$ such that (i) $\eta_{k}(B_{i1}^{\Tilde{g}}) = \eta_{k}(B_{i1}^{g})$; and (ii)
\begin{align*}
    c_{ki}^{g\Tilde{g}}\psi_{ki}^{g\Tilde{g}} = |\psi_{ki}^{g\Tilde{g}}|, \quad c_{ki}^{g\Tilde{g}} := c_{ki}^{g\Tilde{g}}(X_{i}, (S_{i}^{g})_{g \in \G}, (\phi(B_{i0}))_{\phi \in \Phi}).
\end{align*}
for a known function $c_{ki}^{g\Tilde{g}}(\cdot)$ defined over the support of $(X_{i}, (S_{i}^{g})_{g \in \G}, (\phi(B_{i0}))_{\phi \in \Phi})$.
\end{assumption}
Assumption \ref{main:ass:general.variation}(i) requires the experiment to generate ceteris paribus variation in feature $\phi_{k}$. Assumption \ref{main:ass:general.variation}(ii) requires the existence of correction terms $c_{ki}^{g\Tilde{g}}$ for the signs $\psi_{ki}^{g\Tilde{g}}$. In Specification \ref{main:spec:active} for active control, this corresponds to $\text{sign}(S_{i}^{T} - S_{i}^{L}) = 1$, which in that case is constant. In Specification \ref{main:spec:passive} for passive control Interaction \ref{main:int:IPIV}, this corresponds to $\text{sign}(S_{i}^{T} - \phi(B_{i0}))$. However, Assumption \ref{main:ass:general.variation}(ii) allows the function $c_{ki}^{g\Tilde{g}}(\cdot)$ to depend on the full set of observed variables that are independent of group assignment $G_{i}$. The potential flexibility of $c_{ki}^{g\Tilde{g}}(\cdot)$ allows the experiment to use a multitude of institutional or theoretical arguments to construct correction terms. In particular, this formulation covers experiments with qualitative information.

\subsubsection{TSLS Estimation}\label{main:sec:general.estimation}

\begin{specification}[Conditional]\label{main:spec:conditional}
Let $W_{i}$ be a covariate vector that includes $c_{ki}^{g\Tilde{g}}$ and potentially other observed variables independent of $G_{i}$. Let $F_{i} = W_{i}'\pi_{0} + \pi\1\{G_{i} = \tilde{g}\}c_{ki}^{g\Tilde{g}}$ be the first-stage population linear regression of $\phi_{k}(B_{i1})$ on $W_{i}$ and $\1\{G_{i} = \tilde{g}\}c_{ki}^{g\Tilde{g}}$ conditional on $G_{i} \in \{g, \Tilde{g}\}$. The TSLS specification of interest is 
\begin{align*}
    \phi_{k}(B_{i1}) &= W_{i}'\pi_{0} + \pi\1\{G_{i} = \tilde{g}\}c_{ki}^{g\Tilde{g}} + \zeta_{i}, \\
    Y_{i} &= W_{i}'\gamma_{0} + \gamma F_{i} + \upsilon_{i},
\end{align*}
where $\zeta_{i}$ and $\upsilon_{i}$ are population regression residuals. The TSLS coefficient of interest is $\gamma$.
\end{specification}

In what follows, let $\E_{g\Tilde{g}}[\cdot] := \E[\cdot|G_{i} \in \{g, \Tilde{g}\}]$.

\begin{proposition}\label{main:prop:conditional}
Suppose that Assumptions \ref{main:ass:IV}, \ref{main:ass:general.parametric}, and \ref{main:ass:general.variation} are satisfied. Consider Specification \ref{main:spec:conditional} for $\phi_{k} \in \Phi \cap \{\phi_{k}\}$ and an associated pair of groups $g, \tilde{g} \in \G$. If (i) $\E_{g\Tilde{g}}[W_{i}W_{i}']$ is full-rank and $\P(G_{i} = \Tilde{g}) \in (0,1)$; and (ii) $\E_{g\Tilde{g}}[c_{ki}^{g\Tilde{g}}\phi_{k}(B_{i1})|G_{i} = \Tilde{g}] - \E_{g\Tilde{g}}[c_{ki}^{g\Tilde{g}}\phi_{k}(B_{i1})|G_{i} = g] \neq 0$, then Specification \ref{main:spec:conditional} identifies $\gamma = \beta_{k}^{g\Tilde{g}}$.
\end{proposition}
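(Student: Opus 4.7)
The plan is to mirror the three-step strategy used for Propositions \ref{main:prop:passive.baseline} and \ref{main:prop:active.baseline}: reduce the conditional TSLS coefficient to a Wald-type ratio, apply the fundamental theorem of calculus along the only feature that varies between the two groups, and use the correction term $c_{ki}^{g\Tilde{g}}$ to neutralize the sign contamination. The new wrinkles relative to the main-text propositions are that (i) the expectations condition on $G_i \in \{g,\Tilde{g}\}$; (ii) the interaction $c_{ki}^{g\Tilde{g}}$ is an arbitrary function of pre-treatment variables rather than a specific functional of $(S_i^T,\phi(B_{i0}))$; and (iii) $Y_i^*$ depends on the full vector of features $\eta_0(B)$, so I must use the ceteris paribus part of Assumption \ref{main:ass:general.variation}(i) to collapse the problem to a single feature.

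\textbf{Step 1 (TSLS algebra).} I would invoke Lemma \ref{main:lemma:fully.general}, restricted to the sub-population with $G_i \in \{g,\Tilde{g}\}$. Because $W_i$ contains $c_{ki}^{g\Tilde{g}}$ and Assumption \ref{main:ass:IV}(i) implies that $G_i$ is independent of $W_i$ conditional on $G_i \in \{g,\Tilde{g}\}$, the Frisch-Waugh residual of the instrument $\1\{G_i = \Tilde{g}\}c_{ki}^{g\Tilde{g}}$ after partialling out $W_i$ is proportional to $(\1\{G_i = \Tilde{g}\} - p)c_{ki}^{g\Tilde{g}}$, where $p := \P(G_i = \Tilde{g}\mid G_i \in \{g,\Tilde{g}\})$. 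The rank condition on $\E_{g\Tilde{g}}[W_iW_i']$ together with the relevance condition in (ii) ensures this yields a valid Wald ratio; then Assumption \ref{main:ass:IV} (independence and exclusion) lets me swap each conditional expectation for its potential-outcome analog, giving
\begin{equation*}
\gamma = \frac{\E[c_{ki}^{g\Tilde{g}}(Y_i(B_{i1}^{\Tilde{g}}) - Y_i(B_{i1}^g))]}{\E[c_{ki}^{g\Tilde{g}}(\phi_k(B_{i1}^{\Tilde{g}}) - \phi_k(B_{i1}^g))]}.
\end{equation*}

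\textbf{Step 2 (ceteris paribus FTC).} Assumption \ref{main:ass:general.variation}(i) gives $\eta_k(B_{i1}^{\Tilde{g}}) = \eta_k(B_{i1}^g)$, so under Assumption \ref{main:ass:general.parametric} the numerator integrand reduces to $Y_i^*(\phi_k(B_{i1}^{\Tilde{g}}),\eta_k(B_{i1}^g)) - Y_i^*(\phi_k(B_{i1}^g),\eta_k(B_{i1}^g))$. Applying Lemma \ref{main:lemma:FTC} to the univariate map $v \mapsto Y_i^*(v,\eta_k(B_{i1}^g))$ on the interval between $\phi_k(B_{i1}^g)$ and $\phi_k(B_{i1}^{\Tilde{g}})$ yields $\psi_{ki}^{g\Tilde{g}}|\phi_k(B_{i1}^{\Tilde{g}}) - \phi_k(B_{i1}^g)|\Bar{\beta}_{ki}^{g\Tilde{g}}$. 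The denominator integrand factors analogously as $\psi_{ki}^{g\Tilde{g}}|\phi_k(B_{i1}^{\Tilde{g}}) - \phi_k(B_{i1}^g)|$.

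\textbf{Step 3 (sign cancellation).} By Assumption \ref{main:ass:general.variation}(ii), $c_{ki}^{g\Tilde{g}}\psi_{ki}^{g\Tilde{g}} = |\psi_{ki}^{g\Tilde{g}}|$, and $|\psi_{ki}^{g\Tilde{g}}|\cdot|\phi_k(B_{i1}^{\Tilde{g}}) - \phi_k(B_{i1}^g)| = |\phi_k(B_{i1}^{\Tilde{g}}) - \phi_k(B_{i1}^g)|$ since both sides vanish when the two feature values coincide. Pulling these identities through the expectation in both numerator and denominator reduces the ratio to $\E[|\phi_k(B_{i1}^{\Tilde{g}}) - \phi_k(B_{i1}^g)|\Bar{\beta}_{ki}^{g\Tilde{g}}]/\E[|\phi_k(B_{i1}^{\Tilde{g}}) - \phi_k(B_{i1}^g)|] = \E[w_{ki}^{g\Tilde{g}}\Bar{\beta}_{ki}^{g\Tilde{g}}] = \beta_k^{g\Tilde{g}}$, completing the identification. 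The main obstacle I expect is the bookkeeping in Step 1: ensuring that the conditioning on $G_i \in \{g,\Tilde{g}\}$, the rank and relevance conditions, and the inclusion of $c_{ki}^{g\Tilde{g}}$ in $W_i$ interact cleanly so that the partialled-out Wald ratio isolates exactly the pre-multiplication by $c_{ki}^{g\Tilde{g}}$ that Step 3 needs. Once Step 1 is established, Steps 2 and 3 are essentially identical to the arguments already used for the baseline propositions.
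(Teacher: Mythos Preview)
Your proposal is correct and follows essentially the same route as the paper's proof: invoke Lemma \ref{main:lemma:fully.general} conditionally on $G_i \in \{g,\Tilde{g}\}$ with $I_i = c_{ki}^{g\Tilde{g}}$ to obtain the Wald ratio, then combine Assumption \ref{main:ass:general.parametric}, Lemma \ref{main:lemma:FTC}, and Assumption \ref{main:ass:general.variation}(ii) to reduce to $\beta_k^{g\Tilde{g}}$. Your explicit use of Assumption \ref{main:ass:general.variation}(i) in Step 2 to freeze $\eta_k(B_{i1}^g)$ is a detail the paper leaves implicit, but it is exactly the right justification.
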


If there are multiple pairs $\{g, \Tilde{g}\}$ such that $\beta_{k}^{g\Tilde{g}}$ is identified for a fixed $k$, then we can also aggregate these parameters. For example, we can take a control group $g$ (i.e., $\S^{g} = \{\varnothing\}$) as the baseline comparison group, and then design the treatment groups $\Tilde{g} \neq g$ to shift expectations $\mu$---this speaks to, for instance, experiments that consider the impact of different forms of communication \citep{coibion2022monetary}. In particular, given choice of weights $\alpha^{\Tilde{g}}_{k}$ that are non-negative and sum to one, we can construct
\begin{align*}
    \sum_{\Tilde{g} \neq g}\alpha^{\Tilde{g}}_{k}\beta_{k}^{g\Tilde{g}}.
\end{align*}
We advise against using standard TSLS specifications for such estimands. Even if there were no contamination issues, the presence of multiple group indicators would create a multiple instrument problem \citep{mogstad2021causal}. Relatedly, TSLS specifications with multiple features as endogenous variables generally do not produce interpretable aggregations \citep{bhuller20222sls}.

\subsection{Lemmas}
\paragraph{Partial Effect Representation.}
For a given feature $\phi: \B \to \R$, consider some function $v \mapsto f_{i}(v)$ that is defined over $\{\phi(B): B \in \B\}$. Let $\V_{i}^{g\Tilde{g}}$ be the set of feature values between $\phi(B_{i1}^{g})$ and $\phi(B_{i1}^{\tilde{g}})$, and $\psi_{i}^{g\Tilde{g}} := \text{sign}(\phi(B_{i1}^{\tilde{g}}) - \phi(B_{i1}^{g}))$.
\begin{lemma}\label{main:lemma:FTC}
If $\{\phi(B): B \in \B\}$ is convex and $v \mapsto f_{i}(v)$ is continuously differentiable, then
\begin{align*}
    f_{i}(\phi(B_{i1}^{\tilde{g}})) - f_{i}(\phi(B_{i1}^{g})) &= \psi_{i}^{g\Tilde{g}} \int \partial_{v}f_{i}(v) \1\{v \in \V_{i}^{g\Tilde{g}}\} dv.
\end{align*}
In particular, if $|\phi(B_{i1}^{\tilde{g}}) - \phi(B_{i1}^{g})| \neq 0$, then
\begin{align*}
    f_{i}(\phi(B_{i1}^{\tilde{g}})) - f_{i}(\phi(B_{i1}^{g})) = \psi_{i}^{g\Tilde{g}}|\phi(B_{i1}^{\tilde{g}}) - \phi(B_{i1}^{g})| \int \lambda_{i}^{g\tilde{g}}(v)\partial_{v}f_{i}(v) dv,
\end{align*}
where 
\begin{align*}
    \lambda_{i}^{g\tilde{g}}(v) := \frac{\1\{v \in \V_{i}^{g\Tilde{g}}\}}{\displaystyle \int \1\{v \in \V_{i}^{g\Tilde{g}}\} dv}
\end{align*}
is the density function corresponding to the uniform distribution over $\V_{i}^{g\Tilde{g}}$.
\end{lemma}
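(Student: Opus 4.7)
The plan is a direct application of the Fundamental Theorem of Calculus (FTC) to $v \mapsto f_i(v)$, using the convexity assumption to ensure the path of integration lies in the domain on which $f_i$ is continuously differentiable. First I would fix the two endpoints $a := \phi(B_{i1}^g)$ and $b := \phi(B_{i1}^{\tilde{g}})$, note that both lie in $\phi(\B)$, and invoke convexity of $\phi(\B)$ to conclude that the closed interval between $a$ and $b$ is contained in $\phi(\B)$. This justifies applying the FTC to $f_i$ along that interval.

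Next I would split into three cases according to $\psi_i^{g\tilde{g}}$. If $a < b$, then $\psi_i^{g\tilde{g}} = 1$ and $\V_i^{g\tilde{g}} = [a,b]$, and the FTC gives
\begin{equation*}
f_i(b) - f_i(a) = \int_a^b \partial_v f_i(v)\, dv = \int \partial_v f_i(v)\, \1\{v \in \V_i^{g\tilde{g}}\}\, dv,
\end{equation*}
matching the claimed identity with $\psi_i^{g\tilde{g}} = 1$. If $a > b$, then $\psi_i^{g\tilde{g}} = -1$ and $\V_i^{g\tilde{g}} = [b,a]$, and reversing the limits of integration yields
\begin{equation*}
f_i(b) - f_i(a) = -\int_b^a \partial_v f_i(v)\, dv = -\int \partial_v f_i(v)\, \1\{v \in \V_i^{g\tilde{g}}\}\, dv,
\end{equation*}
again matching since $\psi_i^{g\tilde{g}} = -1$. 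If $a = b$, then by the convention $\text{sign}(0) = \1\{0 \geq 0\} - \1\{0 \leq 0\} = 0$, so $\psi_i^{g\tilde{g}} = 0$, and both sides vanish (the indicator is supported on a single point, which has Lebesgue measure zero).

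For the second display, suppose $|b - a| \neq 0$. Then $\V_i^{g\tilde{g}}$ has positive Lebesgue measure equal to $|b - a|$, so
\begin{equation*}
\int \1\{v \in \V_i^{g\tilde{g}}\}\, dv = |\phi(B_{i1}^{\tilde{g}}) - \phi(B_{i1}^g)|,
\end{equation*}
and dividing and multiplying the first identity by this quantity yields the uniform density $\lambda_i^{g\tilde{g}}(v)$ expression. There is no real obstacle here; the only subtlety is the book-keeping in the equal-endpoint case and the verification that convexity of $\phi(\B)$ is exactly what makes the FTC applicable without any detour through the underlying belief space $\B$. Once that is observed, the lemma reduces to elementary calculus.
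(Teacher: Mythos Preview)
Your proposal is correct and follows exactly the same approach as the paper, which simply states that the lemma follows from the fundamental theorem of calculus. Your write-up is in fact more detailed than the paper's one-line proof, spelling out the role of convexity and the case split on $\psi_i^{g\tilde{g}}$.
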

\begin{proof}[Proof of Lemma \ref{main:lemma:FTC}]
$ $ \newline
This follows from the fundamental theorem of calculus. 
\end{proof}

\paragraph{General Specification.} The considered TSLS specifications are special cases of the following specification. 
\begin{specification}[General]\label{main:spec:fully.general}
Let $I_{i}$ be an observed interaction vector such that $G_{i} \ind I_{i}$. Let $W_{i}$ be a covariate vector that includes $I_{i}$ and potentially other observed variables independent of $G_{i}$. Let $F_{i} = W_{i}'\pi_{0} + \1\{G_{i} = \tilde{g}\}I_{i}'\pi$ be the first-stage population linear regression of $\phi(B_{i1})$ on $W_{i}$ and $\1\{G_{i} = \tilde{g}\}I_{i}$. The TSLS specification of interest is 
\begin{align*}
    \phi(B_{i1}) &= W_{i}'\pi_{0} + \1\{G_{i} = \Tilde{g}\}I_{i}'\pi + \zeta_{i}, \\
    Y_{i} &= W_{i}'\gamma_{0} + \gamma F_{i} + \upsilon_{i},
\end{align*}
where $\zeta_{i}$ and $\upsilon_{i}$ are population regression residuals. The coefficient of interest is $\gamma$.
\end{specification}

\begin{lemma}\label{main:lemma:fully.general}
Consider $\G = \{g, \Tilde{g}\}$ and suppose that Assumption \ref{main:ass:IV} is satisfied. If (i) $\E[W_{i}W_{i}']$ is full-rank and $\P(G_{i} = \Tilde{g}) \in (0,1)$; and (ii) $\E[I_{i}\phi(B_{i1})|G_{i} = \Tilde{g}] - \E[I_{i}\phi(B_{i1})|G_{i} = g] \neq 0$, then Specification \ref{main:spec:fully.general} identifies  
\begin{align*}
    \gamma = \frac{\E[I_{i}'\pi(Y_{i}(B_{i1}^{\tilde{g}})- Y_{i}(B_{i1}^{g}))]}{\E[I_{i}'\pi(\phi(B_{i1}^{\tilde{g}})- \phi(B_{i1}^{g}))]}.
\end{align*}
\end{lemma}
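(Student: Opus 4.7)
The plan is to compute the TSLS coefficient $\gamma$ via Frisch--Waugh--Lovell (FWL), then use the conditional independence properties of $G_i$ to reduce the resulting expressions to potential-outcome form. By FWL applied to the second stage, $\gamma$ equals
\begin{align*}
    \gamma = \frac{\E[\tilde{F}_i Y_i]}{\E[\tilde{F}_i \phi(B_{i1})]},
\end{align*}
where $\tilde{F}_i := F_i - \text{proj}(F_i \mid W_i)$. Because $F_i = W_i'\pi_0 + \1\{G_i = \tilde{g}\}I_i'\pi$ and $W_i'\pi_0$ already lies in the span of $W_i$, this residual equals $\1\{G_i = \tilde{g}\}I_i'\pi - \text{proj}(\1\{G_i = \tilde{g}\}I_i'\pi \mid W_i)$. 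The full-rank condition on $\E[W_iW_i']$ ensures these projections are well defined.

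Next I would compute $\text{proj}(\1\{G_i = \tilde{g}\}I_i \mid W_i)$. Since $W_i$ is composed of $I_i$ together with other variables independent of $G_i$, independence gives $G_i \ind W_i$, hence
\begin{align*}
    \E[\1\{G_i = \tilde{g}\}I_i \mid W_i] = \P(G_i = \tilde{g})\,I_i,
\end{align*}
which is linear in $W_i$, so the linear projection coincides with the conditional expectation. Therefore $\tilde{F}_i = D_i I_i'\pi$ with $D_i := \1\{G_i = \tilde{g}\} - \P(G_i = \tilde{g})$. That the denominator $\E[\tilde{F}_i \phi(B_{i1})]$ is nonzero follows, after the potential-outcome rewrite below, from assumption (ii).

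The final step is to substitute $\phi(B_{i1}) = \sum_{g' \in \{g,\tilde g\}} \1\{G_i = g'\}\phi(B_{i1}^{g'})$ and, using Assumption \ref{main:ass:IV}(ii), $Y_i = \sum_{g'} \1\{G_i = g'\}Y_i(B_{i1}^{g'})$. For the denominator,
\begin{align*}
    \E[D_i I_i'\pi\,\phi(B_{i1})]
    &= \E\bigl[(\1\{G_i=\tilde{g}\} - \P(G_i=\tilde{g}))I_i'\pi\bigl(\1\{G_i=\tilde g\}\phi(B_{i1}^{\tilde g}) + \1\{G_i=g\}\phi(B_{i1}^{g})\bigr)\bigr].
\end{align*}
Using independence of $G_i$ from $(I_i, B_{i1}^g, B_{i1}^{\tilde g})$ from Assumption \ref{main:ass:IV}(i), each indicator factors out as its probability, and the expression collapses to
\begin{align*}
    \P(G_i=\tilde g)\P(G_i=g)\,\E\bigl[I_i'\pi(\phi(B_{i1}^{\tilde g}) - \phi(B_{i1}^{g}))\bigr].
\end{align*}
The identical argument applied to the numerator gives $\P(G_i=\tilde g)\P(G_i=g)\,\E[I_i'\pi(Y_i(B_{i1}^{\tilde g}) - Y_i(B_{i1}^{g}))]$. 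The common positive factor $\P(G_i=\tilde g)\P(G_i=g)$, nonzero by (i), cancels in the ratio and yields the claimed identity.

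The routine obstacle is bookkeeping around FWL with a vector-valued interaction: one must verify that $\tilde{F}_i$ equals $D_iI_i'\pi$ exactly rather than only up to a projection on $W_i$, which is why the containment of $I_i$ within $W_i$ matters. The substantive obstacle is relating assumption (ii), stated as a conditional expectation, to the unconditional denominator; this is handled by noting that $G_i \ind (I_i, B_{i1}^{\tilde g})$ implies $\E[I_i\phi(B_{i1})\mid G_i = \tilde g] = \E[I_i\phi(B_{i1}^{\tilde g})]$, and likewise for $g$, so (ii) is exactly the nonvanishing of the reduced denominator.
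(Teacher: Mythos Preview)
Your proposal is correct and follows essentially the same route as the paper: both arguments apply Frisch--Waugh--Lovell to obtain $\gamma = \E[\tilde F_i Y_i]/\E[\tilde F_i \phi(B_{i1})]$, use $G_i \ind W_i$ and $I_i \subseteq W_i$ to compute $\tilde F_i = (\1\{G_i=\tilde g\}-\P(G_i=\tilde g))I_i'\pi$, and then invoke Assumption~\ref{main:ass:IV} to rewrite numerator and denominator in potential-outcome form before cancelling the common factor $\P(G_i=\tilde g)\P(G_i=g)=\var(\1\{G_i=\tilde g\})$. The only place the paper is slightly more explicit is in verifying the denominator is nonzero: it derives $\E[I_i'\pi(\phi(B_{i1}^{\tilde g})-\phi(B_{i1}^{g}))]=\pi'\E[I_iI_i']\pi$ and notes that (ii) forces $\pi\neq 0$, whereas your statement that ``(ii) is exactly the nonvanishing of the reduced denominator'' elides the step linking the nonzero \emph{vector} in (ii) to the nonzero \emph{scalar} $\pi'$ times that vector via the first-stage formula for $\pi$.
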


\begin{proof}[Proof of Lemma \ref{main:lemma:fully.general}]
$ $ \newline
The rank condition on $\E[W_{i}W_{i}']$ means that the coefficients from a given linear regression on $W_{i}$ are identified. Assumption \ref{main:ass:IV} implies that the linear regression of $\1\{G_{i} = \Tilde{g}\}I_{i}$ on $W_{i}$ is $\P(G_{i} = \Tilde{g})I_{i}$, since $I_{i}$ is included in $W_{i}$. Moreover $\P(G_{i} = \Tilde{g}) \in (0,1)$ implies that the second moment matrix of the residuals $(\1\{G_{i} = \Tilde{g}\} - \P(G_{i} = \Tilde{g}))I_{i}$ is full-rank. In particular, the first-stage coefficients are identified. Letting $\Tilde{F}_{i} := F_{i} - \LP[F_{i}|W_{i}]$, where $\LP[F_{i}|W_{i}]$ is the linear regression of $F_{i}$ on $W_{i}$, the orthogonality conditions imply
\begin{align*}
    \E[\Tilde{F}_{i}Y_{i}] = \gamma\E[\Tilde{F}_{i}F_{i}] = \gamma\E[\Tilde{F}_{i}\phi(B_{i1})].
\end{align*}
$F_{i} = W_{i}'\pi_{0} + \1\{G_{i} = \tilde{g}\}I_{i}'\pi$ implies $\LP[F_{i}|W_{i}] = W_{i}'\pi_{0} + \P(G_{i} = \Tilde{g})I_{i}'\pi$. Therefore,
\begin{align*}
     \E[\Tilde{F}_{i}Y_{i}] &= \E[(\1\{G_{i} = \Tilde{g}\} - \P(G_{i} = \Tilde{g}))Y_{i}I_{i}'\pi] \\
     &= \var(\1\{G_{i} = \Tilde{g}\})(\E[Y_{i}I_{i}'\pi|G_{i}  = \Tilde{g}] - \E[Y_{i}I_{i}'\pi|G_{i} = g]) \\
     &= \var(\1\{G_{i} = \Tilde{g}\})\E[I_{i}'\pi (Y_{i}(B_{i1}^{\Tilde{g}}) - Y_{i}(B_{i1}^{g}))],
\end{align*}
where the last equality uses Assumption \ref{main:ass:IV}. We can proceed analogously with 
\begin{align*}
    \E[\Tilde{F}_{i}\phi(B_{i1})] \propto \E[I_{i}'\pi(\phi(B_{i1}^{\tilde{g}})- \phi(B_{i1}^{g}))] &= \pi'\E[I_{i}(\phi(B_{i1}^{\tilde{g}})- \phi(B_{i1}^{g}))] \\
    &= \pi'(\E[I_{i}\phi(B_{i1})|G_{i} = \Tilde{g}]- \E[I_{i}\phi(B_{i1})|G_{i} = g]) \\
    &= \pi'\E[I_{i}I_{i}']\pi.
\end{align*}
$\E[I_{i}\phi(B_{i1})|G_{i} = \Tilde{g}]- \E[I_{i}\phi(B_{i1})|G_{i} = g] \neq 0$, and so $\pi \neq 0$. In particular, $\E[\Tilde{F}_{i}\phi(B_{i1})] > 0$, and so $\gamma$ is identified from  
\begin{align*}
    \E[I_{i}'\pi(Y_{i}(B_{i1}^{\tilde{g}})- Y_{i}(B_{i1}^{g}))] &= \gamma\E[I_{i}'\pi(\phi(B_{i1}^{\tilde{g}})- \phi(B_{i1}^{g}))].
\end{align*}
\end{proof}

\subsection{Proofs}\label{main:sec:proofs}

\begin{proof}[Proof of Proposition \ref{main:prop:passive.baseline}]
$ $ \newline
The conditions of Lemma \ref{main:lemma:fully.general} are satisfied with $\G = \{C, T\}$:
\begin{align*}
    \gamma = \frac{\E[I_{i}'\pi (Y_{i}(B_{i1}^{T})- Y_{i}(B_{i1}^{C}))]}{\E[I_{i}'\pi(\phi(B_{i1}^{T})- \phi(B_{i1}^{C}))]} &= \frac{\E[I_{i}'\pi\psi_{i}^{CT}|\phi(B_{i1}^{T}) - \phi(B_{i1}^{C})| \Bar{\beta}_{i}^{CT}]}{\E[I_{i}'\pi\psi_{i}^{CT}|\phi(B_{i1}^{T}) - \phi(B_{i1}^{C})|]},
\end{align*}
where the second equality follows from Assumption \ref{main:ass:parametric} and Lemma \ref{main:lemma:FTC}.
\end{proof}

\begin{proof}[Proof of Proposition \ref{main:prop:active.baseline}]
$ $ \newline
The conditions of Lemma \ref{main:lemma:fully.general} are satisfied with $\G = \{L,H\}$ and $I_{i} = 1$:
\begin{align*}
    \gamma = \frac{\E[(Y_{i}(B_{i1}^{H})- Y_{i}(B_{i1}^{L}))]}{\E[(\phi(B_{i1}^{H})- \phi(B_{i1}^{L}))]} &= \frac{\E[\psi_{i}^{LH}|\phi(B_{i1}^{H}) - \phi(B_{i1}^{L})| \Bar{\beta}_{i}^{LH}]}{\E[\psi_{i}^{LH}|\phi(B_{i1}^{H}) - \phi(B_{i1}^{L})|]},
\end{align*}
where the second equality follows from Assumption \ref{main:ass:parametric} and Lemma \ref{main:lemma:FTC}.
\end{proof}

\begin{proof}[Proof of Proposition \ref{main:prop:passive.MLRP}]
$ $ \newline
Assumption \ref{main:ass:signal.monotonicity}, stability, and $S_{i}^{T} \neq \phi(B_{i0})$ imply $\text{sign}(S_{i}^{T} - \phi(B_{i0}))\psi_{i}^{CT} = |\psi_{i}^{CT}|$. 
\end{proof}

\begin{proof}[Proof of Proposition \ref{main:prop:active.MLRP}]
$ $ \newline
Assumption \ref{main:ass:signal.monotonicity} and neutrality imply $\psi_{i}^{LH} = \text{sign}(S_{i}^{H} - S_{i}^{L}) = 1$.
\end{proof}

\begin{proof}[Proof of Proposition \ref{main:prop:MLRP}]
$ $ \newline
If $\omega' > \omega$, then $s \mapsto q_{i1}^{g}(s|\omega')/q_{i1}^{g}(s|\omega)$ is increasing function in $s \in \R$. In particular, if $B_{i1}^{g}(\omega|s)$ is a Bayesian rule, then
\begin{align*}
    \frac{dB_{i1}^{g}(\omega'|s)/d\nu}{dB_{i1}^{g}(\omega|s)/d\nu} = \frac{q_{i1}^{g}(s|\omega')}{q_{i1}^{g}(s|\omega)}\frac{b_{i0}(\omega')}{b_{i0}(\omega)}
\end{align*}
is increasing in $s$, which means $s \mapsto B_{i1}^{g}(\cdot|s)$ is increasing in the sense of first-order stochastic dominance. Therefore, since $\varphi(\omega)$ is increasing in $\omega$, then $\phi(B_{i1}^{g}(\omega|s))$ is increasing in $s \in \R$. 

If $B_{i1}^{g}(\omega|s)$ is instead anchored, then the form of $\phi$ implies
\begin{align*}
    \phi(B_{i1}^{g}(\omega|s)) = \tau_{i}^{g}\phi(\Bar{B}_{i}^{g}) + (1 - \tau_{i}^{g}) \int \varphi(\omega) \frac{q_{i1}^{g}(s|\omega)b_{i0}(\omega)}{\displaystyle \int q_{i1}^{g}(s|\omega)b_{i0}(\omega) d\nu(\omega)} d\nu(\omega).  
\end{align*}
The latter is the attenuated value of $\phi$ evaluated at a Bayesian rule, and so is increasing in $s \in \R$. This implies that $\phi(B_{i1}^{g}(\omega|s))$ is increasing in $s \in \R$.

If $B_{i1}^{g}(\omega|s)$ is instead Grether, then
\begin{align*}
    \frac{dB_{i1}^{g}(\omega'|s)/d\nu}{dB_{i1}^{g}(\omega|s)/d\nu} = \left(\frac{q_{i1}^{g}(s|\omega')}{q_{i1}^{g}(s|\omega)}\right)^{\chi_{i1}^{g}}\left(\frac{b_{i0}(\omega')}{b_{i0}(\omega)}\right)^{^{\chi_{i0}^{g}}},
\end{align*}
which is increasing in $s \in \R$, which means $s \mapsto B_{i1}^{g}(\cdot|s)$ is increasing in the sense of first-order stochastic dominance. Therefore, signal monotonicity follows as in the Bayesian case. 
\end{proof}

\begin{proof}[Proof of Proposition \ref{main:prop:random.coefficient}]
$ $ \newline
Consider the problem of choosing $a \in m_{i}(\Y)$ to minimize $\int (\sum_{k=1}^{K}\theta_{k i}\varphi_{k}(\omega) - a)^{2}dB(\omega)$. This is solved by $a_{i}(B) = \sum_{k=1}^{K}\theta_{k i}\phi_{k}(B)$. Note that the inverse of $m_{i}$ exists by strict monotonicity. Therefore, for each $y \neq m^{-1}(a_{i}(B))$, we have
\begin{align*}
    \int u_{i}(\omega, m_{i}^{-1}(a_{i}(B))) dB(\omega) > \displaystyle \int u_{i}(\omega, y) dB(\omega).
\end{align*}
Thus, $Y_{i}(B) = m_{i}^{-1}(\sum_{k = 1}^{K}\theta_{k i}\phi_{k}(B))$. The inverse function theorem implies that $h \mapsto Y_{i}^{*}(h)$ is continuously differentiable over $\R^{K}$ with derivatives $\partial_{v} Y_{i}^{*}(v, h_{-k}) = (\partial_{y}m_{i}(Y_{i}^{*}(v, h_{-k}))^{-1}\theta_{ki}$. To show convexity of $\eta_{0}(\B)$, consider $h, h' \in \eta_{0}(\B)$. By definition, there exist $B_{h}, B_{h'} \in \B$ such that $h_{k} = \phi_{k}(B_{h})$ and $h_{k}' = \phi_{k}(B_{h'})$ for each $k$. The definitions of $\B$ and $\eta_{0} := (\phi_{k})_{k=1}^{K}$ mean that $(1 - \alpha)B_{h} + \alpha B_{h'} \in \B$. Therefore, $(1 - \alpha)h + \alpha h' \in \eta(\B)$, and so Assumption \ref{main:ass:general.parametric} is satisfied.
\end{proof}

\begin{proof}[Proof of Proposition \ref{main:prop:parametric.optimization}]
$ $ \newline
To appeal to the dominated convergence theorem, assume that for each $y \in \Y$ and $h \in \eta_{0}(\B)$,
\begin{enumerate}[label=(\roman*)]
    \item $\int |u_{i}(\omega, y)|b^{*}(\omega|h)d\omega < \infty$;
    \item there exists $\delta > 0$ such that $\int \sup_{\Tilde{y}:|\tilde{y} - y| \leq \delta}|\partial_{y}^{n}u_{i}(\omega, \tilde{y})| b^{*}(\omega|h) d\omega < \infty$ for each $n \in \{1,2\}$;
    \item there exists $\delta > 0$ such that $\int \sup_{\Tilde{v}: |\Tilde{v} - v| \leq \delta}|\partial_{y}u_{i}(\omega, y)\partial_{v}b^{*}(\omega|\Tilde{v}, h_{-k})|d\omega < \infty$ for each $k$.
\end{enumerate}
These conditions allow us to exchange limits and integrals. Let $U_{i}(y, h) := \int u_{i}(\omega, y) b^{*}(\omega|h) d\omega$. Observe that $\partial_{y}^{2}U_{i}(y, h) = \int \partial_{y}^{2}u_{i}(\omega, y) b^{*}(\omega|h) d\omega < 0$. Thus, since $Y_{i}(B)$ satisfies representation (\ref{main:eq:optimization.general}), we have $\partial_{y}U_{i}(Y_{i}^{*}(h), h) = \int \partial_{y}u_{i}(\omega, Y_{i}^{*}(h)) b^{*}(\omega|h) d\omega = 0$. By the implicit function theorem,
\begin{align*}
    \partial_{v} Y_{i}^{*}(v, h_{-k}) = \frac{\partial_{v}\partial_{y}U_{i}(Y_{i}^{*}(v, h_{-k}), v, h_{-k})}{|\partial_{y}^{2}U_{i}(Y_{i}^{*}(v, h_{-k}), v, h_{-k})|},
\end{align*}
which is continuous over the convex parameter space $\eta(\B)$. Thus, Assumption \ref{main:ass:general.parametric} is satisfied.
\end{proof}

\begin{proof}[Proof of Proposition \ref{main:prop:conditional}]
$ $ \newline
The conditions of Lemma \ref{main:lemma:fully.general} are satisfied conditional on $G_{i} \in \{g, \Tilde{g}\}$ with $I_{i} = c_{i}^{g\tilde{g}}$. Following the proof of Lemma \ref{main:lemma:fully.general}  yields
\begin{align*}
    \gamma = \frac{\E_{g\Tilde{g}}[c_{i}^{g\tilde{g}}Y_{i}|G_{i} = \Tilde{g}] - \E_{g\Tilde{g}}[c_{i}^{g\tilde{g}}Y_{i}|G_{i} = g]}{\E_{g\Tilde{g}}[c_{i}^{g\tilde{g}}\phi_{k}(B_{i1})|G_{i} = \Tilde{g}] - \E_{g\Tilde{g}}[c_{i}^{g\tilde{g}}\phi_{k}(B_{i1})|G_{i} = g]} &= \frac{\E[c_{i}^{g\tilde{g}}(Y_{i}(B_{i1}^{\tilde{g}})- Y_{i}(B_{i1}^{g}))]}{\E[c_{i}^{g\tilde{g}}(\phi_{k}(B_{i1}^{\tilde{g}})- \phi_{k}(B_{i1}^{g}))]} \\
    &= \frac{\E[|\phi_{k}(B_{i1}^{\tilde{g}})- \phi_{k}(B_{i1}^{g})|\Bar{\beta}_{i}^{g\Tilde{g}}]}{\E[|\phi_{k}(B_{i1}^{\tilde{g}})- \phi_{k}(B_{i1}^{g})|]},
\end{align*}
where the first equality follows from Assumption \ref{main:ass:IV} and the second from Assumption \ref{main:ass:general.parametric}, Lemma \ref{main:lemma:FTC}, and Assumption \ref{main:ass:general.variation}(ii).
\end{proof}

\subsection{Applications}

\subsubsection{Additional Estimated Coefficients}\label{main:sec:appendix.applications.1}

\renewcommand{\thefigure}{E\arabic{figure}}
\setcounter{figure}{0}

Figures \ref{fig:kumar_appendix_coefficients} and \ref{fig:jager_appendix_coefficients} display coefficients from the applications that are not shown in the main text.

\begin{figure}[ht!]
    \centering
    \includegraphics[width = \textwidth]{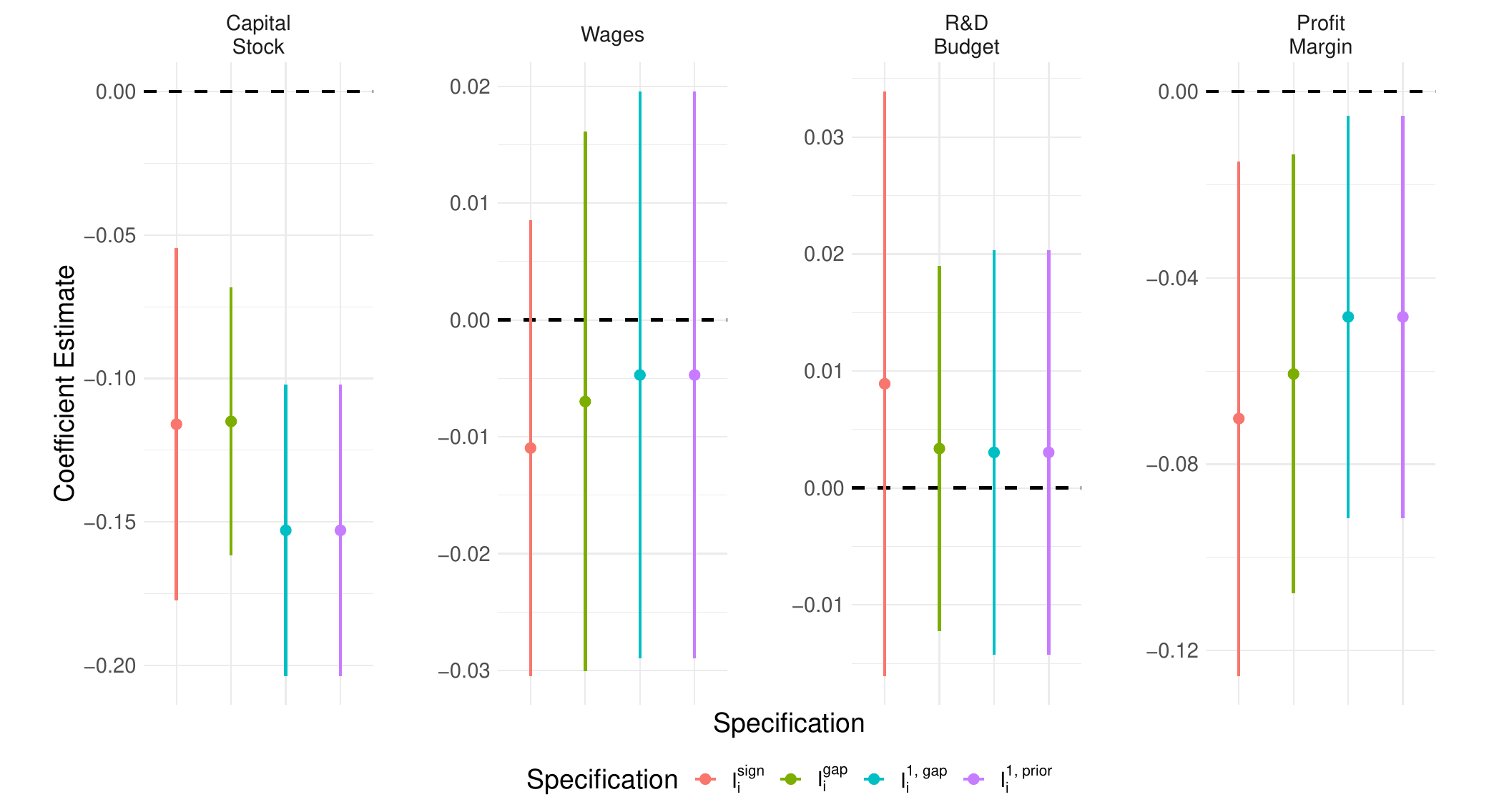}
    \caption{Estimated Coefficients---Data from \cite{kumar2023effect}}
    \label{fig:kumar_appendix_coefficients}
    \floatfoot{\textit{Note:} Outcomes are the all economic outcomes from \cite{kumar2023effect}, with the exception of price, employment, and advertising budget, which are in the main text. From left to right, the outcomes are the realized change in the capital stock of the firm, the realized change in the wages of the firm, the realized change in the R\&D budget of the firm, and the realized change in the profit margin of the firm, all relative to the planned change in these outcomes six months prior. The interactions and estimation procedures are described in Section \ref{main:sec:interpretation}. The points are point-estimates and the bars are 95\% confidence intervals.}
\end{figure}

\begin{figure}[h!]
    \centering
    \includegraphics[width = \textwidth]{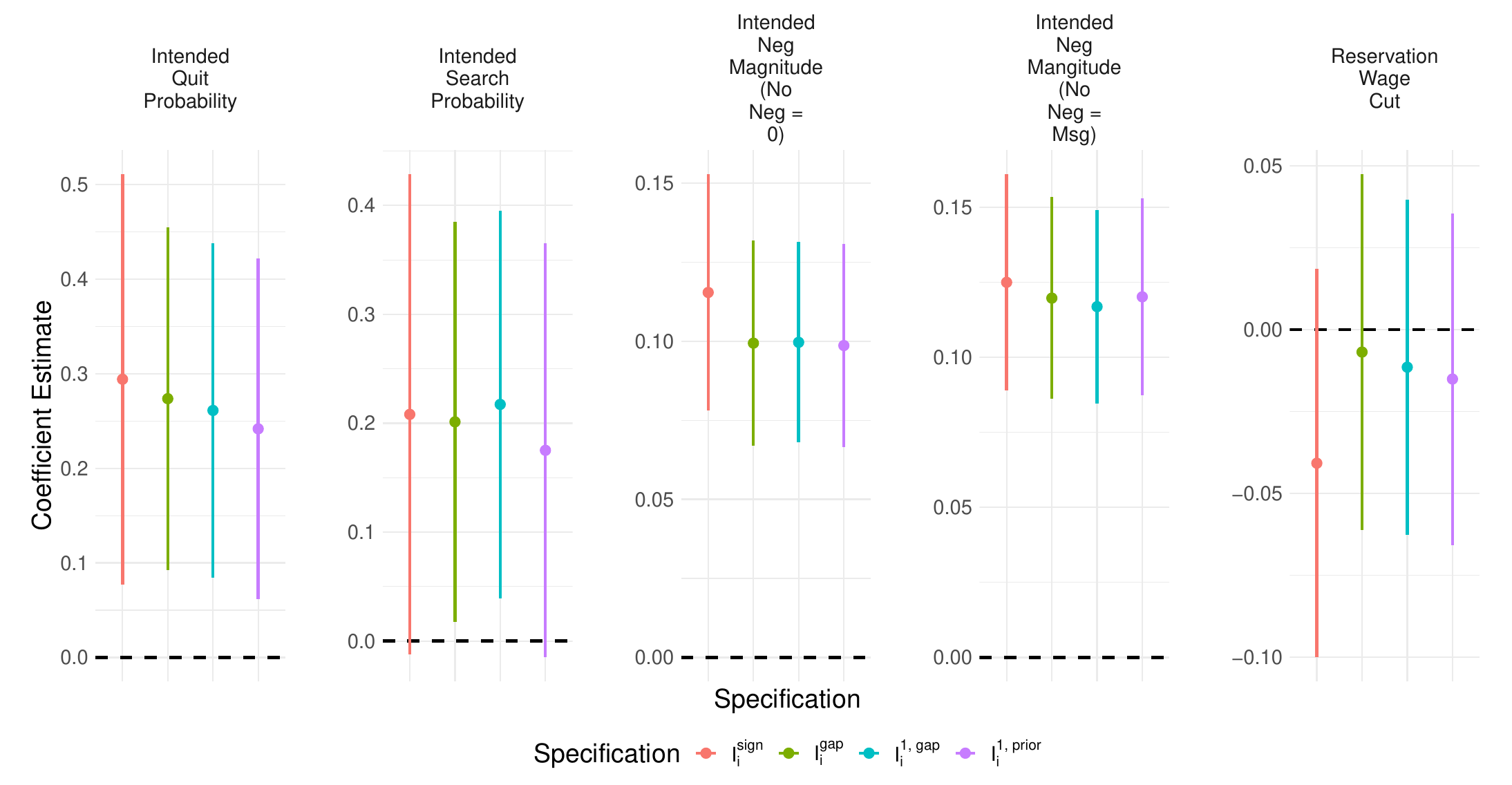}
    \caption{Estimated Coefficients---Data from \cite{jager2022worker}}
    \label{fig:jager_appendix_coefficients}
    \floatfoot{\textit{Note:} Outcomes are all economic outcomes from \cite{jager2022worker}, with the exception of intended negotiation probability, which is in the main text. From left to right, the outcomes are the worker's probability of quitting their current job, the probability of finding another job, the intended negotiation magnitude (no negotiations coded as 0), the intended negotiation magnitude (no negotiations coded as missing), and the reservation wage cut as a percent of their current wage. The interactions and estimation procedures are described in Section \ref{main:sec:interpretation}. The points are point-estimates and the bars are 95\% confidence intervals.}
\end{figure}

\subsubsection{Weight Characterization for Jager et al. (2024)}\label{main:sec:appendix.applications.2}

For the intended negotiation probability outcome, we characterize the average weight that TSLS gives to agents for each decile of the perception gap: $\E[w_{i}^{CT}(I)|(S_{i}^{T} - \mu(B_{i0})) \in [a_j, b_j]]$, where $a_j, b_j$ are the bounds of the deciles and $w_{i}^{CT}(I)$ denotes the weights on agent $i$ when the interaction is $I_{i}$. \label{main:sec:complier.characterization}
Given $X_{i} \ind G_{i}$ and Assumption \ref{main:ass:IV}, we have\begin{align}
    \E[w_{i}^{CT}(I)f(X_{i})] = \frac{\E[I_{i}'\pi f(X_{i})\mu(B_{i1})|G_{i} = T] - \E[I_{i}'\pi f(X_{i})\mu(B_{i1})|G_{i} = C]}{\E[I_{i}'\pi\mu(B_{i1})|G_{i} = T] - \E[I_{i}'\pi\mu(B_{i1})|G_{i} = C]},
\end{align}
where $f$ is some function of $X_{i}$. For example, if $X_{i} \in [x_{min}, x_{max}]$, then
\begin{align*}
    f_{a,b}(X_{i}) := \frac{\1\{X_{i} \in [a,b]\}}{\P(X_{i} \in [a,b])}, \quad x_{min} \leq a < b \leq x_{max},
\end{align*}
gives
\begin{align*}
    \E[w_{i}^{CT}(I)f_{a,b}(X_{i})] = \E[w_{i}^{CT}(I)|X_{i} \in [a,b]].
\end{align*}
In this example, if we have a collection of intervals $[a_{j}, b_{j}]$ such that $\cup_{j=1}^{J}[a_{j}, b_{j}] = [x_{min}, x_{max}]$, then ranging $\E[w_{i}^{CT}(I)f_{a_{j},b_{j}}(X_{i})]$ over $j = 1, \ldots, J$ allows us to compute the average weight of group $[a_{j}, b_{j}]$. We can also consider
\begin{align*}
    \beta^{CT}(I) := \E[w_{i}^{CT}(I)\Bar{\beta}_{i}^{CT}] = \sum_{j=1}^{J}\P(X_{i} \in [a_{j}, b_{j}])\E[w_{i}^{CT}(I)\Bar{\beta}_{i}^{CT}|X_{i} \in [a_{j}, b_{j}]],
\end{align*}
where each $\E[w_{i}^{CT}(I)\Bar{\beta}_{i}^{CT}|X_{i} \in [a_{j}, b_{j}]]$ is identified as
\begin{align*}
    \E[w_{i}^{CT}(I)\Bar{\beta}_{i}^{CT}|X_{i} \in [a_{j}, b_{j}]] = \frac{\E[I_{i}'\pi f_{a_{j}, b_{j}}(X_{i})Y_{i}|G_{i} = T] - \E[I_{i}'\pi f_{a_{j}, b_{j}}(X_{i})Y_{i}|G_{i} = C]}{\E[I_{i}'\pi\mu(B_{i1})|G_{i} = T] - \E[I_{i}'\pi\mu(B_{i1})|G_{i} = C]}.
\end{align*}

Figure \ref{fig:jager-weight} plots the weights against the perception gap. Compared to interactions $I_i^{gap}$, $I_i^{1,gap}$, and $I_i^{1,prior}$, $I_i^{sign}$ places relatively less weight on those in the lowest decile, and relatively more weight on those in the middle deciles. Therefore, if agents with moderate perception gaps have larger APEs, then the $I_i^{sign}$ coefficient will be larger than those for interactions $I_i^{gap}$, $I_i^{1,gap}$, and $I_i^{1,prior}$, which would rationalize the estimates in Figure \ref{fig:jager-coeff}.

\begin{figure}[h!]
    \centering
    \includegraphics[width = \textwidth]{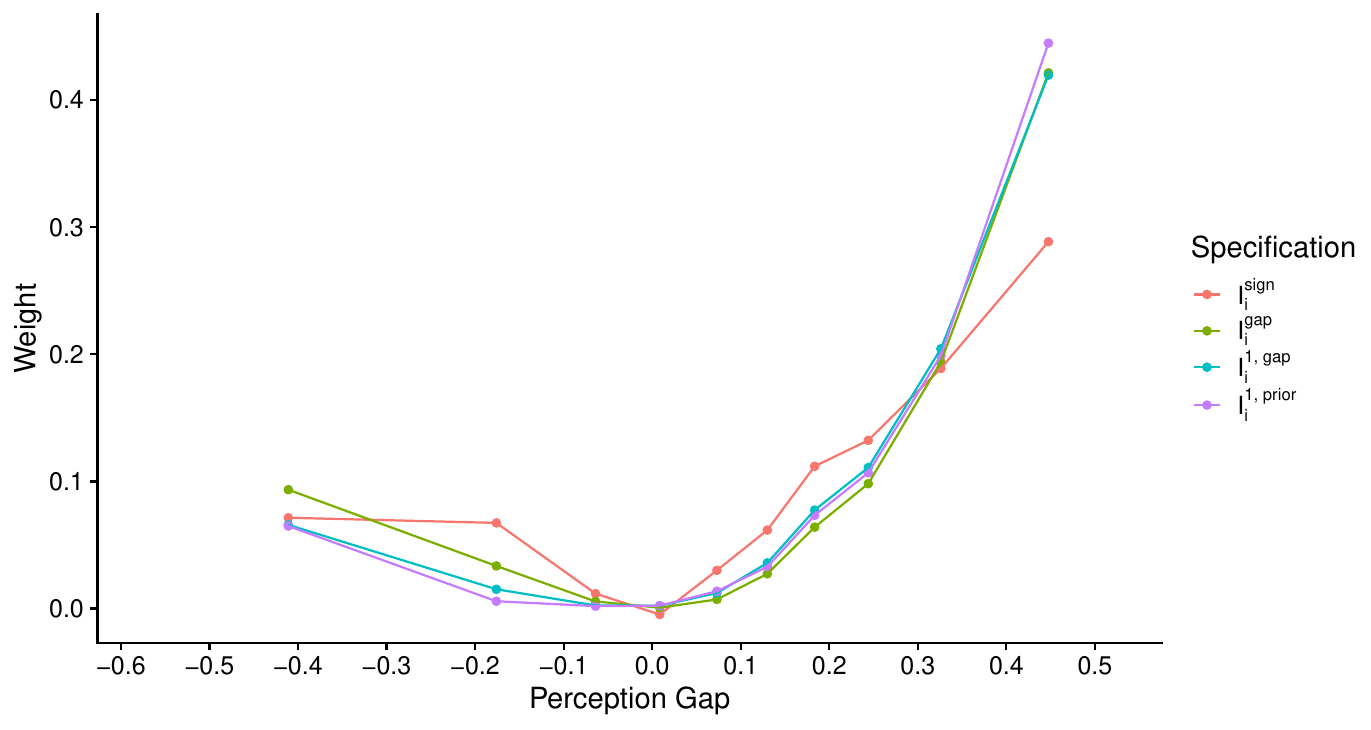}
    \caption{Perception Gap Characterization of Weights---Data from \cite{jager2022worker}}
    \floatfoot{\textit{Note:} Each point represents the characterized TSLS weight for the corresponding decile bin, using intended negotiation probability as the outcome. For example, the leftmost point corresponds to the characterized weight for those in the lowest decile bin, (-0.572,-0.228]. For a description of the characterization procedure, see Appendix \ref{main:sec:complier.characterization}
.}
    \label{fig:jager-weight}
\end{figure}

We can also characterize the contribution of each decile of perception gap to the estimated TSLS coefficients. Denote the contribution as $\E[w_{i}^{CT}(I)\Bar{\beta}_{i}^{CT}|(S_{i}^{T} - \mu(B_{i0})) \in [a_j, b_j]]$. Figure \ref{fig:jager_weight_Y} plots these contributions against the perception gaps. Comparing the shape of the plot in Figure \ref{fig:jager_weight_Y} against that of Figure \ref{fig:jager-weight} shows where the differences in the APEs are. For example, small differences in Figure \ref{fig:jager-weight} weight contributions for the leftmost deciles translate to large differences in the analogous Figure \ref{fig:jager_weight_Y} TSLS estimate contributions. This is evidence of heterogeneous partial effects across agents---if these effects were constant, then the plots in Figures \ref{fig:jager-weight} and \ref{fig:jager_weight_Y} would have the same shape.


\begin{figure}
    \centering
    \includegraphics[width = \textwidth]{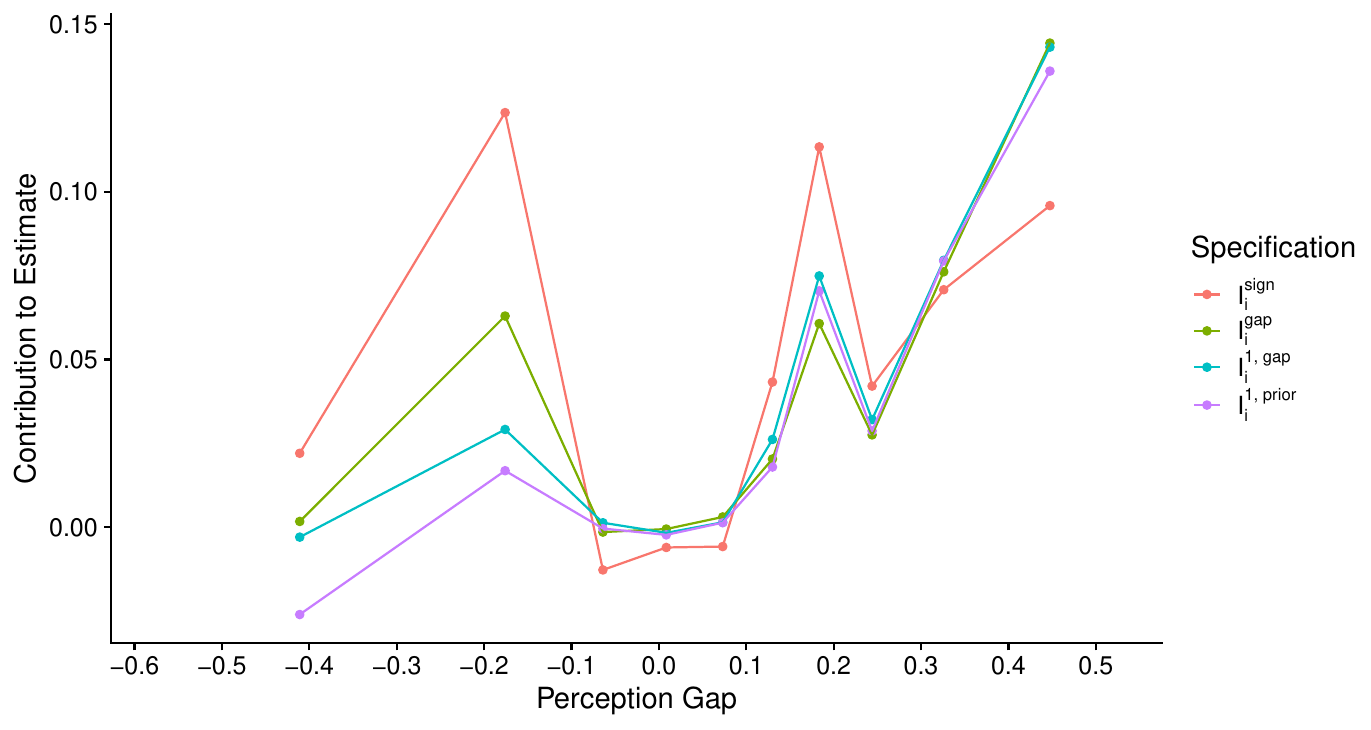}
    \caption{Perception Gap Characterization of TSLS---Data from \cite{jager2022worker}}
    \label{fig:jager_weight_Y}
    \floatfoot{\textit{Note:} Each point represents the characterized TSLS estimate for the corresponding decile bin, using intended negotiation probability as the outcome. For example, the leftmost point corresponds to the characterized estimate for those in the lowest decile bin, (-0.572,-0.228]. For a description of the characterization procedure, see Appendix \ref{main:sec:complier.characterization}.}
\end{figure}

\end{document}